\documentclass[12pt]{article}

\usepackage{a4wide}
\usepackage[english]{babel}
\usepackage{graphicx}
\usepackage{amsmath}
\usepackage{amssymb}
\usepackage{wrapfig}
\usepackage{xspace}
\usepackage{enumerate}
\usepackage{array}
\usepackage{longtable}
\usepackage{caption}
\usepackage{microtype}

\newtheorem{defin}{Definition}
 
\newtheorem{theo}[defin]{Theorem}
 \newenvironment{theorem}{\begin{theo} \sl}{\end{theo}}
\newtheorem{lem}[defin]{Lemma}
 \newenvironment{lemma}{\begin{lem} \sl}{\end{lem}}
\newtheorem{coro}[defin]{Corollary}
 \newenvironment{corollary}{\begin{coro} \sl}{\end{coro}}
\newtheorem{prop}[defin]{Proposition}
 
\newtheorem{obs}[defin]{Observation}
 \newenvironment{observation}{\begin{obs} \sl}{\end{obs}}

\newenvironment{proof}{\emph{Proof.}}{\hfill $\Box$\\}

\graphicspath{{./images/}}

\newcommand{\etal}{\emph{et al.}\xspace}

\newcommand{\delGraph}{constrained generalized Delaunay graph\xspace}

\newcommand{\C}[2]{\ensuremath{C(#1,#2)}}
\renewcommand{\circle}{convex shape\xspace}
\newcommand{\Vis}{\mathord{\it Vis}}
\newcommand{\reg}[3]{\ensuremath{#1^{#2}_{#3}}}

\title{Constrained Generalized Delaunay Graphs Are Plane Spanners%
  \thanks{Research supported in part by FQRNT, NSERC, Carleton University's President's 2010 Doctoral Fellowship, and JST ERATO Grant Number JPMJER1201, Japan.}
  \thanks{Extended abstracts containing some of the results in this paper appeared in the 27th Canadian Conference on Computational Geometry (CCCG 2015)~\cite{BCR15Rectangle} and in Computational Intelligence in Information Systems (CIIS 2016)~\cite{BCR2016GeneralizedDelaunay}.}}

\author{%
  Prosenjit~Bose,%
    \thanks{School of Computer Science, Carleton University, Ottawa, Canada, 
    \texttt{jit@scs.carleton.ca}}\, 
  Jean-Lou~De~Carufel,%
    \thanks{School of Electrical Engineering and Computer Science, University of Ottawa, Ottawa, Canada, 
    \texttt{jdecaruf@uottawa.ca}}\, 
  and Andr\'e~van~Renssen%
    \thanks{School of Information Technologies, University of Sydney, Sydney, Australia,\newline
    \texttt{andre.vanrenssen@sydney.edu.au}}
}

\date{}

\begin{document}

\maketitle

\begin{abstract}
  We look at generalized Delaunay graphs in the constrained setting by introducing line segments which the edges of the graph are not allowed to cross. Given an arbitrary convex shape $C$, a constrained Delaunay graph is constructed by adding an edge between two vertices $p$ and $q$ if and only if there exists a homothet of $C$ with $p$ and $q$ on its boundary that does not contain any other vertices visible to $p$ and~$q$. We show that, regardless of the convex shape $C$ used to construct the constrained Delaunay graph, there exists a constant $t$ (that depends on~$C$) such that it is a plane $t$-spanner of the visibility graph. Furthermore, we reduce the upper bound on the spanning ratio for the special case where the empty convex shape is an arbitrary rectangle to $\sqrt{2} \cdot \left( 2 l/s + 1 \right)$, where $l$ and $s$ are the length of the long and short side of the rectangle. 
\end{abstract}

\section{Introduction}
A geometric graph $G$ is a graph whose vertices are points in the plane and whose edges are line segments between pairs of vertices. Every edge in a geometric graph is weighted by the Euclidean distance between its endpoints. A graph $G$ is called plane if no two edges intersect properly. The distance between two vertices $u$ and $v$ in $G$, denoted by $\delta_G(u, v)$, or simply $\delta(u, v)$ when $G$ is clear from the context, is defined as the sum of the weights of the edges along a minimum-weight path between $u$ and $v$ in $G$. A subgraph $H$ of $G$ is a $t$-spanner of $G$ (for $t\geq 1$) if for each pair of vertices $u$ and $v$, $\delta_H(u, v) \leq t \cdot \delta_G(u, v)$. The smallest value $t$ for which $H$ is a $t$-spanner is the \emph{spanning ratio} or \emph{stretch factor} of $H$. The graph $G$ is referred to as the {\em underlying graph} of $H$. The spanning properties of various geometric graphs have been studied extensively in the literature (see \cite{BS11,NS-GSN-06} for an overview of the topic). 

Most of the research has focused on constructing spanners where the underlying graph is the complete Euclidean geometric graph. We study this problem in a more general setting with the introduction of line segment {\em constraints}. Specifically, let $P$ be a set of points in the plane and let $S$ be a set of line segments with endpoints in $P$, with no two line segments intersecting properly. The line segments of $S$ are called \emph{constraints}. Two points $u$ and $v$ can \textit{see each other} or \emph{are visible to each other} if and only if either the line segment $u v$ does not properly intersect any constraint (i.e., does not intersect the interior of a constraint) or $u v$ is itself a constraint. If two points $u$ and $v$ can see each other, the line segment $u v$ is a \emph{visibility edge}. The \emph{visibility graph} of $P$ with respect to a set of constraints $S$, denoted $\Vis(P,S)$, has $P$ as vertex set and all visibility edges as edge set. In other words, it is the complete graph on $P$ minus all edges that properly intersect one or more constraints in~$S$.

Visibility graphs have been studied extensively within the context of motion planning amid obstacles. Clarkson~\cite{C87} was one of the first to study spanners in the presence of constraints and showed how to construct a linear-sized $(1+\epsilon)$-spanner of $\Vis(P,S)$. Subsequently, Das~\cite{D97} showed how to construct a spanner of $\Vis(P,S)$ with constant spanning ratio and constant degree. Bose and Keil~\cite{BK06} showed that the Constrained Delaunay Triangulation is a $4 \pi \sqrt{3}/9 \approx 2.42$-spanner of $\Vis(P,S)$. The constrained Delaunay graph where the empty \circle is an equilateral triangle was shown to be a 2-spanner of $\Vis(P,S)$~\cite{BFRV12}. We look at the constrained generalized Delaunay graph, where the empty \circle can be any convex shape. 

In the unconstrained setting, it is known that generalized Delaunay graphs are spanners~\cite{BCCS10}, regardless of the \circle used to construct them. A geometric graph $G$ is a spanner when it satisfies the following properties (defined in Section~\ref{sec:SpanningRatioDelaunay}): it is plane, it satisfies the $\alpha$-diamond property, the spanning ratio of any one-sided path is at most $\kappa$, and it satisfies the visible-pair $\kappa'$-spanner property. In particular, $G$ is a $t$-spanner for $t = 2 \kappa \kappa' \cdot \max \left( \frac{3}{\sin(\alpha/2)}, \kappa \right)$. This upper bound is very general, but unfortunately not tight. 

In special cases, better bounds are known. For example, when the empty \circle is a circle, Dobkin~\etal~\cite{DFS90} showed that the spanning ratio is at most $\pi (1 + \sqrt{5}) / 2 \approx 5.09$. Improving on this, Keil and Gutwin~\cite{KG92} reduced the spanning ratio to $4 \pi / 3\sqrt{3} \approx 2.42$. Recently, Xia showed that the spanning ratio is at most 1.998~\cite{X13}. We note that although Xia's proof is in the unconstrained setting, it still holds in the constrained setting. His proof is based on bounding the length of each edge on the path from a vertex $s$ to $t$ that does not intersect $st$ with the arc of the empty circle defining the edge. The length of edges that cross $st$ is then bounded in terms of the non-crossing edges. In the constrained setting, since the edges that do not cross $st$ are still bounded by arcs of circles that are empty of visible points, his result holds.

Lower bounds are also studied for this problem. Bose~\etal~\cite{BDLSV11} showed a lower bound of 1.58, which is greater than $\pi/2$, which was conjectured to be the tight spanning ratio up to that point. Later, Xia and Zhang~\cite{XZ11} improved this to 1.59.

Chew~\cite{C89} showed that if an equilateral triangle is used instead, the spanning ratio is 2 and this ratio is tight. In the case of squares, Chew~\cite{C86} showed that the spanning ratio is at most $\sqrt{10} \approx 3.16$. This was later improved by Bonichon~\etal~\cite{BGHP12}, who showed a tight spanning ratio of $\sqrt{4 + 2 \sqrt{2}} \approx 2.61$. 

In this paper, we show that the \delGraph $G$ is a spanner whose spanning ratio depends solely on the properties of the empty \circle $C$ used to create it: We show that $G$ satisfies the $\alpha_C$-diamond property and the visible-pair $\kappa_C$-spanner property (defined in Section~\ref{sec:SpanningRatioDelaunay}), which implies that it is a $t$-spanner of $\Vis(P,S)$ for: 
\begin{equation*}
  t = 
  \begin{cases}
    2 \kappa_C \cdot \max \left( \frac{3}{\sin(\alpha_C/2)}, \kappa_C \right), & \text{ \emph{if $G$ is a triangulation}} \\
    2 \kappa_C^2 \cdot \max \left( \frac{3}{\sin(\alpha_C/2)}, \kappa_C \right), & \text{ \emph{otherwise.}}
  \end{cases}
\end{equation*}

This proof is not a straightforward adaptation from the work by Bose~\etal~\cite{BCCS10} due to the presence of constraints. For example, showing that a region contains no vertices that are visible to some specific vertex $v$ requires more work than showing that this same region contains no vertices, since we allow vertices in the region that are not visible to $v$. Also, since the spanning ratio between some pairs of non-visible vertices of the constrained Delaunay graph may be unbounded (i.e., the length of the path between any two non-visible points can be made arbitrarily large by extending a constraint that blocks visibility), any proof of bounded spanning ratio needs to be restricted to the visible pairs of vertices. This implies that induction can only be applied to pairs of visible vertices, meaning that the inductive arguments cannot be applied in a straightforward manner as in the unconstrained case, since in the unconstrained case there is a spanning path between every pair of vertices. 

Our spanning proof works directly on the Delaunay graph, instead of constructing the required paths using the Voronoi diagram as was done in~\cite{BCCS10}. This simplifies the algorithm for constructing these short paths, and also simplifies the proofs.

It is also worth noting that our definition of constrained Delaunay graph is slightly more general than the standard definition of these graphs: While it is usually assumed that all constraints are edges in the graphs, we do not require this and only add a constraint as an edge if it also satisfies the empty circle property used to construct the rest of the graph. Therefore, our result is slightly more general since we show that a subgraph of the standard constrained Delaunay graph is a plane spanner. We elaborate on this point in more detail in Section~\ref{sec:Preliminaries}.

Finally, though the aforementioned result is very general, since it holds for arbitrary \circle{s}, its implied spanning ratio is far from tight. To improve on this, in Section~\ref{sec:rectangles} we consider the special case where the empty \circle $C$ is a rectangle and show that it has spanning ratio at most $\sqrt{2} \cdot \left( 2 l/s + 1 \right)$, where $l$ and $s$ are the length of the long and short side of $C$. This reduces the dependency on the aspect ratio from cubic (as implied by our general bound) to linear.

\section{Preliminaries}
\label{sec:Preliminaries}
Throughout this paper, we fix a bounded \circle $C$. We assume without loss of generality that the origin lies in the interior of $C$. A \emph{homothet} of $C$ is obtained by scaling $C$ with respect to the origin, followed by a translation. Thus, a homothet of $C$ can be written as \[x + \lambda C = \{ x + \lambda z : z \in C \},\] for some scaling factor $\lambda > 0$ and some point $x$ in the interior of $C$ after translation. 

For a given set of vertices $P$ and a set of constraints $S$, the \delGraph is usually defined as follows. Given any two visible vertices $p$ and $q$, let \C{p}{q} be any homothet of $C$ with $p$ and $q$ on its boundary. The \delGraph contains an edge between $p$ and $q$ if and only if $p q$ is a constraint or there exists a \C{p}{q} such that there are no vertices of $P$ in the interior of \C{p}{q} visible to both $p$ and $q$. We assume that no four vertices lie on the boundary of any homothet of $C$. In addition, if $C$ has any straightline segments on its boundary, we assume that no three points lie on a line parallel to any such segment. Like in the unconstrained setting, these assumptions are required to guarantee planarity of the constructed graphs. While it is possible to remove these assumptions and consider the planar subgraphs that contains exactly one of the crossing edges, this significantly complicates the proofs. 

Now, we slightly modify this definition such that there is an edge between two visible vertices $p$ and $q$ if and only if there exists a \C{p}{q} such that there are no vertices of $P$ in the interior of \C{p}{q} visible to both $p$ and $q$.  Note that this modified definition implies that constraints are not necessarily edges of the graph, since constraints may not necessarily adhere to the visibility property. Our modified graph is always a subgraph of the \delGraph. Therefore, any result proven on our modified graph also holds for the graph that includes all the constraints. As such, we prove the stronger result on our modified graph. For simplicity, in the remainder of the paper, when we refer to the \delGraph, we mean our modified subgraph of the \delGraph.

\subsection{Auxiliary Lemmas}
Next, we present three auxiliary lemmas that are needed to prove our main results. First, we reformulate a lemma that appears in~\cite{S03}. 

\begin{lemma}
  \label{lem:HomothetIntersection}
  Let $C$ be a closed convex curve in the plane. The intersection of two distinct homothets of $C$ is the union of at most two sets, each of which is either a segment or a single point. 
\end{lemma}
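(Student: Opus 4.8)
The plan is to analyze the boundary of the symmetric difference, or rather to track how the two boundary curves cross. Let the two homothets be $K_1 = x_1 + \lambda_1 C$ and $K_2 = x_2 + \lambda_2 C$, both compact convex bodies (the curve $C$ bounds a convex region). Since they are convex, $K_1 \cap K_2$ is convex, and $K_1 \setminus K_2$ and $K_2 \setminus K_1$ are each connected (this is where convexity of both pieces matters). The boundary $\partial(K_1 \cap K_2)$ consists of alternating arcs coming from $\partial K_1$ and $\partial K_2$, meeting at the points of $\partial K_1 \cap \partial K_2$. So the whole claim reduces to showing that $\partial K_1 \cap \partial K_2$ has a very restricted structure: it is the union of at most two ``pieces,'' each a point or a segment.

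First I would reduce to the case of two homothets that are translates of each other, i.e. $\lambda_1 = \lambda_2$. Indeed, the key structural fact is about $\partial K_1 \cap \partial K_2$; one can rescale the whole picture so that $K_1$ has scaling factor $1$, but the ratio $\lambda_1/\lambda_2$ cannot be removed, so this reduction is only partial. A cleaner route: consider the difference body or, better, directly use a sweep/support-line argument. For a convex curve, fix a direction $u$ and consider the supporting line of $K_i$ with outward normal $u$; because $K_2$ is a scaled translate of $K_1$, the support function satisfies $h_{K_2}(u) = \langle x_2,u\rangle + \lambda_2 h_C(u)$ and similarly for $K_1$. The point is that as $u$ rotates, the ``radial'' comparison between the two boundaries changes sign in a controlled way.

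The cleanest approach I would actually carry out: look at the two boundary curves as graphs over the boundary of $K_1\cap K_2$, or parametrize. Since both bodies are convex, walking along $\partial K_1$ counterclockwise, the quantity ``am I inside or outside $K_2$'' can change at most a bounded number of times — but without strict convexity it could in principle change along a whole arc. The resolution: the set $\partial K_1 \cap \partial K_2$, being the intersection of the boundaries of two convex translates(-and-scales), has the property that its complement in $\partial K_1$ is $\{$points of $\partial K_1$ strictly inside $K_2\} \cup \{$points strictly outside $K_2\}$, and each of these is connected (an open arc or empty). Here is why each is connected: the set of points of $K_1$ strictly inside $K_2$ is $\mathrm{int}(K_2)\cap K_1$, which is convex, hence its intersection with $\partial K_1$ is connected; symmetrically for the exterior using that $K_1 \setminus K_2$ is connected. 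Two complementary open arcs on a circle are separated by at most two boundary pieces of $\partial K_1\cap\partial K_2$, and each such piece is a connected subset of the intersection of two convex boundaries, hence (by convexity, any ``fat'' piece would force the two curves to share a segment) either a point or a line segment. Doing the same bookkeeping from $\partial K_2$ gives the full statement about the intersection $K_1\cap K_2$ being the union of two sets each a segment, a point, or empty.

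The main obstacle is the lack of strict convexity: a segment of $\partial K_1$ could coincide with a segment of $\partial K_2$, and one must argue that the overlap is a single segment (not several), and that at most two such overlap-or-touch components occur. This is exactly where I would lean on the connectivity of $K_1\setminus K_2$, $K_2\setminus K_1$, and $\mathrm{int}(K_1)\cap\mathrm{int}(K_2)$ — all guaranteed by convexity — to bound the number of alternations along each boundary circle by two. Everything else (that a connected subset of $\partial K_1\cap\partial K_2$ lying on two convex boundaries is a point or a segment) is a short convexity argument: if it contained three non-collinear points it would contain the triangle they span in each body, forcing both boundaries to bulge, contradicting that the set lies on $\partial K_1$. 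I would also need to handle degenerate cases (one homothet contained in the other, giving empty intersection of boundaries; or tangency) as trivial subcases consistent with the ``possibly empty'' clause.
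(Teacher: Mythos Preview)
The paper does not prove this lemma; it simply cites it from~\cite{S03}. So there is no paper proof to compare against, only your attempt.

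Your argument has a genuine gap: the two connectivity claims you rely on are false for general convex bodies, and your justifications never invoke the homothety hypothesis. Concretely, you assert that $\mathrm{int}(K_2)\cap K_1$ being convex forces $\mathrm{int}(K_2)\cap\partial K_1$ to be connected. This is wrong: take $K_1$ the unit disk and $K_2$ a long thin horizontal ellipse (or rectangle) of height~$0.2$ centered at the origin. Then $\mathrm{int}(K_2)\cap K_1$ is convex, yet $\mathrm{int}(K_2)\cap\partial K_1$ is two disjoint arcs near the east and west poles. The same example shows $K_1\setminus K_2$ is disconnected (two crescents, top and bottom), refuting your ``symmetrically for the exterior'' step. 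Since the lemma's conclusion fails for this pair --- the boundary curves meet in four points --- any argument that goes through for arbitrary convex $K_1,K_2$ must be broken. Yours is.

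What is missing is precisely the use of homothety. One standard route: after rescaling, the two bodies are $C$ and $x+\lambda C$. If $\lambda\neq 1$ there is a center of homothety $o$; along each ray from $o$ the two boundaries are met in a specific order determined once and for all by $\lambda$, which forces at most two sign changes as the ray sweeps around. If $\lambda=1$ the bodies are translates, and a sweep by lines parallel to the translation vector gives the same conclusion. You gestured at a reduction of this kind in your first paragraph but then abandoned it in favor of the convexity-only argument; that abandonment is where the proof fails.
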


Though the following lemma (see also Figure~\ref{fig:VisiblePointInsideTriangle}) was applied to constrained $\theta$-graphs in~\cite{BFRV12}, the property holds for any visibility graph. We say that a region $R$ \emph{contains} a vertex $v$ if $v$ lies in the interior or on the boundary of $R$. We call a region \emph{empty} if it does not contain any vertex of $P$ in its interior. We also note that we distinguish between \emph{vertices} and \emph{points}. A \emph{point} is any point in $\mathbb{R}^2$, while a \emph{vertex} is part of the input. 

\begin{lemma}
  \label{lem:ConvexChain}
  Let $u$, $v$, and $w$ be three arbitrary points in the plane such that $u w$ and $v w$ are visibility edges and $w$ is not the endpoint of a constraint intersecting the interior of triangle $u v w$. Then there exists a convex chain of visibility edges from $u$ to $v$ in triangle $u v w$, such that the polygon defined by $u w$, $w v$ and the convex chain is empty and does not contain any constraints.
\end{lemma}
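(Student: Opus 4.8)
The plan is to build the convex chain incrementally by a recursive/peeling argument on the triangle $uvw$. First I would check the base case: if segment $uv$ is itself a visibility edge and the triangle $uvw$ is empty and contains no constraints, then the chain is just the single edge $uv$ and we are done. Otherwise, either $uv$ is blocked by a constraint or the triangle contains a vertex of $P$; in both situations I want to find an intermediate vertex $x$ inside (or on the boundary of) triangle $uvw$ that is visible to $w$, such that $wx$ splits the problem into two strictly smaller subtriangles $uwx$ and $xwv$ to which induction applies.

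The key step is choosing $x$ correctly. I would consider the set of vertices of $P$ lying in the closed triangle $uvw$ together with the endpoints of constraints that intersect the interior of $uvw$ — note that by hypothesis $w$ is not such an endpoint, so every such endpoint lies in the interior or on the two sides $uw$, $vw$ or on $uv$. Among all these candidate points, I would pick the one, call it $x$, that minimizes the angle $\angle(uw, wx)$ at $w$ (breaking ties by distance to $w$, taking the closest). The claim is then: (i) $wx$ is a visibility edge — because any constraint crossing $wx$ would have an endpoint inside the triangle making a strictly smaller angle at $w$, contradicting the choice of $x$ (and $w$ is not itself a constraint endpoint blocking things by hypothesis); (ii) the subtriangle $uwx$ contains no vertex of $P$ in its interior and no constraint intersecting its interior that could block recursion — again by the minimality of the angle, any such obstruction would yield a better candidate than $x$; (iii) recursion applies to $xwv$ since $ux, vx$ (the relevant new "base" edges) and $wx$, $wv$ are visibility edges and $w$ is still not a bad constraint endpoint for the smaller triangle. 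Recursing on $uwx$ (which is already "clean", so its chain is just the edge $ux$) and on $xwv$ yields a convex chain from $u$ to $x$ to $\dots$ to $v$; convexity at $x$ follows because $x$ was chosen on the "$u$-side," so the chain turns consistently, and the union of the empty regions below the sub-chains together with the two triangle sides bounds an empty constraint-free polygon.

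To make the induction well-founded I would use as the measure the number of vertices of $P$ in the closed triangle plus the number of constraint endpoints interior to it (or intersecting its interior); each recursive call strictly decreases this measure because $x$ is removed from at least one side. I should also handle the degenerate possibility that the minimizing $x$ lies on side $uw$ or on $uv$ rather than strictly interior — if $x$ is on $uw$, the triangle $uwx$ is degenerate and we simply restart with $x$ in place of $u$; if $x$ is on $uv$, then the chain continues along $uv$ past $x$.

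The main obstacle I anticipate is part (ii): carefully verifying that the subtriangle $uwx$ is genuinely obstruction-free, i.e. that minimizing the angle at $w$ simultaneously rules out both interior vertices of $P$ and interior-crossing constraints in $uwx$. A constraint could in principle cross the interior of $uwx$ while having both endpoints outside the closed triangle $uwx$ — I need the fact that such a constraint must then cross side $wx$ or side $uw$ or the segment $ux$; crossing $wx$ or $uw$ contradicts that these are visibility edges (or, for $uw$, I would need that $uw$ is a visibility edge and handle a constraint's endpoint separately), and crossing $ux$ is impossible once $ux$ is shown to be a visibility edge with empty triangle. Pinning down this case analysis — essentially an application of Lemma~\ref{lem:HomothetIntersection}-style convexity reasoning, but here just elementary planar geometry about how a segment can cross a triangle — is where the real care is needed, and I expect it to be the bulk of the written proof.
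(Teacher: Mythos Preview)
The paper does not actually supply a proof of this lemma; it is quoted from~\cite{BFRV12} and stated without argument. So there is no in-paper proof to compare against directly, but your proposal has a real gap worth flagging.

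Your peeling strategy---pick $x$ in triangle $uvw$ minimizing the angle $\angle(uw,wx)$, then recurse on $uwx$ and $xwv$---does produce a chain of visibility edges whose bounding polygon with $uw$ and $wv$ is empty, and your argument that $uwx$ is obstruction-free is essentially sound. The problem is \emph{convexity}. Ordering the interior vertices by angle at $w$ is not the same as putting them in convex position along the chain; your sentence ``convexity at $x$ follows because $x$ was chosen on the $u$-side'' conflates the two. Concretely: place $w$ high, $u$ and $v$ on a horizontal baseline, a vertex $a$ just above the baseline near $u$, and a vertex $b$ high up near the midpoint, close to $w$. Then $a$ makes a smaller angle with $wu$ at $w$ than $b$ does, so your procedure selects $a$ first and outputs the chain $u,a,b,v$. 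But $a$ lies strictly inside triangle $ubv$, so this chain has a reflex turn at $a$ and is not convex; the correct convex chain is $u,b,v$, with $a$ sitting outside the polygon on the $uv$-side. Since downstream uses of the lemma (notably the proof of Lemma~\ref{lem:HalfEmptyRectangle}) explicitly exploit that consecutive chain edges have monotone slope, convexity is not merely cosmetic.

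The standard fix is to build the chain globally rather than greedily from $w$: take the convex hull of $\{u,v\}$ together with all vertices of $P$ lying in the closed triangle, and let the chain be the hull arc from $u$ to $v$ on the side away from $w$. Convexity is then automatic, and what remains is to verify that each hull edge is a visibility edge and that the region between the chain and $w$ is empty and constraint-free---which follows because any blocking vertex or constraint endpoint in that region would itself have been a hull vertex.
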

\begin{figure}[ht]
  \begin{center}
    \includegraphics{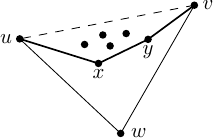}
  \end{center}
  \caption{The convex chain between vertices $u$ and $v$, where thick lines are visibility edges.}
  \label{fig:VisiblePointInsideTriangle}
\end{figure}

Let $p$ and $q$ be two vertices that can see each other and recall that \C{p}{q} is a homothet of $C$ with $p$ and $q$ on its boundary. Extend to half-lines with source $p$ all constraints and edges that have $p$ as an endpoint and that intersect \C{p}{q} (see Figure~\ref{fig:Region}a). Define the clockwise neighbor of $p q$ to be the half-line that minimizes the strictly positive clockwise angle with $p q$ (or the tangent of \C{p}{q} at $p$ if this neighbor does not exist) and define the counterclockwise neighbor of $p q$ to be the half-line that minimizes the strictly positive counterclockwise angle with $p q$ (or the tangent of \C{p}{q} at $p$ if this neighbor does not exist). We define the \emph{cone} $\reg{C}{p}{q}$ that contains $q$ to be the region between the clockwise and counterclockwise neighbor of $p q$. Finally, let \reg{\C{p}{q}}{p}{q}, the \emph{region of \C{p}{q} that contains $q$ with respect to $p$}, be the intersection of \C{p}{q} and $\reg{C}{p}{q}$ (see Figure~\ref{fig:Region}b). 

\begin{figure}[ht]
  \begin{center}
    \includegraphics{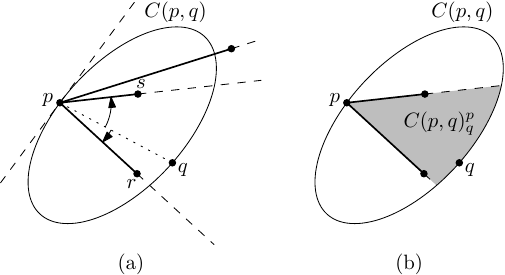}
  \end{center}
  \caption{Defining the region of \C{p}{q} that contains $q$ with respect to $p$: (a) The clockwise and counterclockwise neighbor of $p q$ are the half-lines through $p r$ and $p s$, (b) \reg{\C{p}{q}}{p}{q} is shaded gray.}
  \label{fig:Region}
\end{figure}

\begin{lemma}
  \label{lem:VisibleVertex}
  Let $p$ and $q$ be two vertices that can see each other and let \C{p}{q} be any convex shape with $p$ and $q$ on its boundary. If there is a vertex $x$ in \reg{\C{p}{q}}{p}{q} (other than $p$ and $q$) that is visible to $p$, then there is a vertex $y$ (other than $p$ and $q$) that is visible to both $p$ and $q$ and such that triangle $p y q$ is empty. 
\end{lemma}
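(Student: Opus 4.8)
The plan is to argue by considering a vertex $x$ in $\reg{\C{p}{q}}{p}{q}$ that is visible to $p$ and is, in a suitable sense, closest to the segment $pq$; I will then use the visibility structure inside the cone $\reg{C}{p}{q}$ to promote $x$ (or a nearby vertex) to a vertex $y$ visible to \emph{both} $p$ and $q$ with triangle $pyq$ empty. First I would observe that, by the definition of the cone $\reg{C}{p}{q}$, no constraint or edge incident to $p$ enters the interior of $\reg{C}{p}{q}$; in particular every vertex lying in $\reg{C}{p}{q}$ that is "on the $q$ side" is visible to $p$, because the segment from $p$ to such a vertex stays inside the cone and hence crosses no constraint through $p$, and a constraint not through $p$ that blocked it would have to have an endpoint inside the cone, which can be handled by taking the blocking vertex instead. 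So the real content is getting visibility to $q$ and emptiness of the triangle.

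Next I would set up the main selection. Among all vertices in $\reg{\C{p}{q}}{p}{q} \setminus \{p,q\}$ that are visible to $p$ (the set is nonempty since it contains $x$), choose $y$ to minimize the distance from $y$ to the line through $p$ and $q$, breaking ties by, say, proximity to $q$ along that direction; here one should be a little careful and instead choose $y$ so that the triangle $pyq$ is \emph{inclusion-minimal} among such triangles, which is the cleaner formulation. I claim this $y$ works. To see that $pyq$ is empty: any vertex $z$ in the interior of triangle $pyq$ lies in $\reg{\C{p}{q}}{p}{q}$ as well (the triangle is contained in the cone, since $p$, $y$, $q$ are all in the closed cone and it is convex; and it is contained in $\C{p}{q}$ by convexity of $\C{p}{q}$, using $p,q \in \partial\C{p}{q}$ and $y \in \C{p}{q}$), it is visible to $p$ by the cone argument above (or, if a stray constraint intervenes, replace $z$ by the endpoint of that constraint lying in the triangle, which is still closer to $pq$), and it gives a strictly smaller triangle, contradicting minimality.

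It remains to show $y$ is visible to $q$; this is the step I expect to be the main obstacle, since $q$ need not see into the cone the way $p$ does. Here I would invoke Lemma~\ref{lem:ConvexChain}: suppose for contradiction that $yq$ is not a visibility edge, so some constraint properly crosses $yq$. Since $py$ and (by the cone argument, applied to an appropriate auxiliary vertex) a visibility edge from $p$ toward $q$ exist, and triangle $pyq$ is empty, one can look at the constraint crossing $yq$ and follow it — it must have an endpoint, and that endpoint cannot be strictly inside the empty triangle $pyq$, so it lies on the boundary or outside; tracking where the constraint exits the triangle forces it to have an endpoint that is a vertex in $\reg{\C{p}{q}}{p}{q}$ closer to $pq$ than $y$ (because it separates $y$ from $q$ within the triangle region), again contradicting the minimality of the triangle $pyq$. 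Essentially, the empty triangle plus the cone property leaves no room for a blocking constraint once $y$ is chosen minimally. I would write this last part carefully, distinguishing the subcase where the blocking constraint has an endpoint inside $\reg{\C{p}{q}}{p}{q}$ (handled by minimality) from the subcase where it has an endpoint on the far side but crosses $pq$ or $py$ (contradicting that $py$ is a visibility edge or that $x$, hence $y$, was chosen inside the region to begin with).
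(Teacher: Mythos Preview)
Your plan has a real gap in the emptiness step. You assert that any vertex $z$ in the cone region is visible to $p$, but that is false: the definition of $\reg{C}{p}{q}$ only excludes constraints \emph{incident to $p$} from its interior; a constraint with neither endpoint at $p$ can still cross $pz$. Your fallback --- replace $z$ by an endpoint of the blocking constraint lying in the triangle --- does not close the gap, because that endpoint may again fail to be visible to $p$, and you give no argument that iterating this replacement terminates at a $p$-visible vertex (``closer to $pq$'' is not enough; you need ``visible to $p$'' to contradict minimality). The visibility-to-$q$ step then inherits this problem, since it relies on the triangle already being empty. The argument can be repaired, but essentially by applying Lemma~\ref{lem:ConvexChain} inside triangle $pyq$, at which point you are reproducing the paper's method with extra scaffolding.

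The paper's proof sidesteps the minimality argument entirely with a single application of Lemma~\ref{lem:ConvexChain} to the triangle $pxq$, taking $p$ as the apex $w$. The cone property is exactly the hypothesis needed: $p$ is not the endpoint of any constraint entering the interior of triangle $pxq$. The lemma then hands you a convex chain of visibility edges from $x$ to $q$, with the polygon bounded by $px$, $pq$, and the chain empty and constraint-free. Take $y$ to be the neighbour of $q$ along this chain: then $yq$ is a visibility edge (it is a chain edge), $py$ is a visibility edge (the segment lies in the empty, constraint-free polygon), and triangle $pyq$ is empty (it is contained in that polygon). No extremal choice and no case analysis on blocking constraints is required.
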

\begin{proof}
  We have two visibility edges, namely $p q$ and $p x$. Since $x$ lies in \reg{\C{p}{q}}{p}{q}, $p$ is not the endpoint of a constraint such that $q$ and $x$ lie on opposite sides of the line through this constraint. Hence, we can apply Lemma~\ref{lem:ConvexChain} and we obtain a convex chain of visibility edges from $x$ to $q$ and the polygon defined by $p q$, $p x$ and the convex chain is empty and does not contain any constraints. Furthermore, since the convex chain is contained in triangle $p x q$, which in turn is contained in \C{p}{q}, every vertex along the convex chain is contained in \C{p}{q} (see Figure~\ref{fig:VisibleToBoth}). 

  \begin{figure}[ht]
    \begin{center}
      \includegraphics{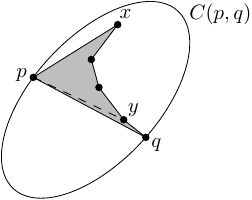}
    \end{center}
    \caption{Vertex $y$ lies in \C{p}{q} and is visible to both $p$ and $q$.}
    \label{fig:VisibleToBoth}
  \end{figure}

  Let $y$ be the neighbor of $q$ along this convex chain. Hence, $y$ is visible to $q$ and contained in \C{p}{q}. Furthermore, $p$ can see $y$, since the line segment $p y$ is contained in the polygon defined by $p q$, $p x$ and the convex chain, which is empty and does not contain any constraints. This also implies that triangle $p y q$ is empty. 
\end{proof}

\section{The Constrained Generalized Delaunay Graph Is Plane with Constant Spanning Ratio}
Before we show that every \delGraph is a spanner, we first show that they are plane. 

\subsection{Planarity}
\label{subsec:planarity}
In order to show that the \delGraph is plane, we first observe that no edge $p q$ of the graph can contain a vertex in its interior, as this vertex would lie in \C{p}{q} and be visible to both endpoints of the edge, contradicting the existence of the edge $p q$. Let $\partial C$ denote the boundary of $C$. 

\begin{observation}
  \label{lem:NoExtension}
  Let $p q$ be an edge of the \delGraph. The line segment $p q$ does not contain any vertices other than $p$ and $q$. 
\end{observation}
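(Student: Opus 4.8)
The plan is to argue by contradiction, mirroring the intuition already sketched in the sentence immediately preceding the statement. Suppose $pq$ is an edge of the \delGraph and that some vertex $r \neq p,q$ of $P$ lies on the open segment $pq$. Since $pq$ is an edge, by the (modified) definition there is a homothet \C{p}{q} with $p$ and $q$ on its boundary whose interior contains no vertex of $P$ visible to both $p$ and $q$. I want to show that $r$ is such a vertex, which is the contradiction.

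First I would establish that $r$ lies in the interior of \C{p}{q}. Since $p$ and $q$ are on the boundary of the convex set \C{p}{q} and $r$ is strictly between them on the segment $pq$, convexity gives that $r \in \C{p}{q}$; because no four vertices lie on a homothet boundary is not quite what rules out $r$ being on the boundary — instead I need that a boundary point strictly between two other boundary points of a convex region forces the boundary to contain the whole segment $pq$, i.e. a line segment of the boundary; I would either invoke strict convexity if assumed, or more carefully observe that if $r$ were on the boundary then $p,q,r$ are three vertices on the boundary of \C{p}{q}, and since they are collinear the open segment $pq$ lies on the boundary, so in fact $r$ is in the relative interior of a boundary segment and can be pushed slightly inward — cleaner: a point on the open segment between two points of a convex set lies in the relative interior of that set, and since \C{p}{q} is full-dimensional, relative interior equals interior. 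So $r$ is in the interior of \C{p}{q}.

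Next I would show $r$ is visible to both $p$ and $q$. Since $pq$ is an edge of the \delGraph, $p$ and $q$ see each other, so the segment $pq$ does not properly intersect any constraint (or is itself a constraint, but then it could not contain a vertex strictly between its endpoints unless that contradicts $S$ having endpoints in $P$ with no proper crossings — actually a constraint may pass through a vertex; I should handle this). The segments $pr$ and $rq$ are sub-segments of $pq$. Any constraint properly crossing $pr$ or $rq$ would also properly cross or touch $pq$; careful case analysis shows such a constraint would have to properly cross $pq$ (giving a contradiction with visibility of $p,q$) unless it merely touches $pq$ at a point, which does not count as a proper intersection. Hence $pr$ and $rq$ are visibility edges, so $r$ is visible to both $p$ and $q$. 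Combined with the previous paragraph, $r$ is a vertex of $P$ in the interior of \C{p}{q} visible to both endpoints, contradicting the choice of \C{p}{q}.

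The main obstacle I anticipate is the degenerate-collinearity bookkeeping: making precise that $r$ is in the honest interior (not just on the boundary) of \C{p}{q}, and ruling out the annoying edge case where a constraint runs along (a portion of) the line through $pq$ and passes through $r$ — in that situation one must check whether $r$ can still fail to "see" $p$ or $q$. I expect the resolution is that proper intersection is what matters, a constraint collinear with $pq$ through $r$ does not properly intersect $pr$ or $rq$, and if $r$ is itself an endpoint of such a constraint the sub-segments are still visibility edges. Once these corner cases are dispatched the argument is just convexity plus unwinding the definition, so the observation follows.
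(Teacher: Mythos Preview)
Your overall approach---assume a vertex $r$ on the open segment $pq$, observe it lies in \C{p}{q} and is visible to both endpoints, contradict the edge definition---is exactly the paper's one-sentence justification preceding the observation; the paper gives no further proof.

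You correctly flag the interior-versus-boundary issue, but your ``cleaner'' resolution is wrong. The claim that a point on the open segment between two points of a full-dimensional convex set lies in the interior is false: if $C$ has a flat side (and $C$ is allowed to be an arbitrary convex shape here, e.g.\ a rectangle), then $p$, $q$, and $r$ can all sit on that flat side of some homothet \C{p}{q}, with $r$ on the boundary, not the interior. In that configuration $r$ does \emph{not} violate the ``no vertex in the interior visible to both'' condition, and your contradiction evaporates. The earlier part of your paragraph actually recognizes exactly this scenario (``the open segment $pq$ lies on the boundary''), so the ``cleaner'' shortcut undoes your own correct diagnosis.

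To close the gap you would need either an additional general-position assumption (no three collinear vertices, or no three vertices on a flat portion of a homothet boundary---stronger than the paper's stated ``no four on a boundary'') or an argument that some \emph{other} homothet witnessing the edge $pq$ must put $r$ in its interior. The paper itself does not address this; it treats the observation as evident. For strictly convex $C$ your argument is fine as is.
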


\begin{lemma}
  \label{lem:PlaneDelaunay}
  The \delGraph is plane. 
\end{lemma}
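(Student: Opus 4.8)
The plan is a standard proof by contradiction: suppose two edges $pq$ and $rs$ of the \delGraph cross properly at a point, and derive a contradiction from the emptiness properties of the witnessing homothets. By definition of the graph, there is a homothet \C{p}{q} with $p$ and $q$ on its boundary whose interior contains no vertex visible to both $p$ and $q$, and similarly a homothet \C{r}{s} for the edge $rs$. Since $pq$ and $rs$ cross, the segments $pq$ and $rs$ lie inside \C{p}{q} and \C{r}{s} respectively (each homothet is convex and contains both endpoints of its chord), so the crossing point lies in both homothets, and in fact a neighbourhood of it does too; hence \C{p}{q} and \C{r}{s} genuinely overlap in a two-dimensional region.

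First I would invoke Lemma~\ref{lem:HomothetIntersection}: the intersection of the two distinct homothets is the union of two pieces, each a segment, a point, or empty. Because the two chords $pq$ and $rs$ cross inside the common region, the boundary circles $\partial\C{p}{q}$ and $\partial\C{r}{s}$ must cross, and the combinatorics of Lemma~\ref{lem:HomothetIntersection} then force the arrangement into the classical ``each homothet contains an endpoint of the other chord'' configuration: one of $r,s$ — say $r$ — lies inside \C{p}{q}, and one of $p,q$ — say $p$ — lies inside \C{r}{s}. (Here ``inside'' should be read as strictly interior; the general-position assumption that no four vertices lie on a homothet boundary, together with Observation~\ref{lem:NoExtension} ruling out a vertex lying on an edge, lets me exclude the degenerate boundary cases.)

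Now I would extract the contradiction. The vertex $r$ lies in the interior of \C{p}{q}; I want to show $r$ is visible to both $p$ and $q$, which contradicts the defining emptiness of \C{p}{q}. If $pr$ did not cross a constraint and $qr$ did not cross a constraint we would be done immediately, but a constraint could separate $r$ from $p$ or from $q$. This is exactly where Lemma~\ref{lem:VisibleVertex} (and behind it Lemma~\ref{lem:ConvexChain}) does the work: since $p$ and $r$ are endpoints of the visibility edges $pq$ and $rs$ which cross, $p$ sees some point on segment $rs$, and walking toward $r$ one finds that $r$ lies in the appropriate cone region \reg{\C{p}{q}}{p}{q} of a suitable homothet and is visible to $p$; applying Lemma~\ref{lem:VisibleVertex} then produces a vertex $y$ inside \C{p}{q} visible to both $p$ and $q$ — the contradiction. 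A symmetric argument works from the side of $rs$ and \C{r}{s}. One must be a little careful choosing which chord's homothet to apply the lemma to, so that the cone hypothesis of Lemma~\ref{lem:VisibleVertex} (that $p$ is not the apex of a constraint separating $q$ from the witness vertex) is actually met; this choice, and the bookkeeping of which endpoint lies inside which homothet, is the main obstacle.

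The only subtlety beyond that is handling the shared-endpoint and collinear-overlap cases that Lemma~\ref{lem:HomothetIntersection} nominally allows (intersection being a segment or point rather than a region): if $pq$ and $rs$ share an endpoint they do not cross properly and there is nothing to prove, and a proper crossing cannot occur along a one-dimensional intersection, so these cases are vacuous. Assembling these pieces gives that no two edges cross properly, i.e.\ the \delGraph is plane.
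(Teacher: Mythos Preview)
Your overall strategy matches the paper's: argue by contradiction, show that one endpoint (say $r$) of the other edge lies in \C{p}{q}, then produce a vertex in \C{p}{q} visible to both $p$ and $q$. But there is a real gap at the visibility step. You assert that ``walking toward $r$ one finds that $r$ lies in the appropriate cone region \reg{\C{p}{q}}{p}{q} \ldots\ and is visible to $p$''---neither claim is justified. A constraint may block $p$ from seeing $r$, and a constraint incident to $p$ may put $r$ outside the cone \reg{\C{p}{q}}{p}{q}. You acknowledge this is ``the main obstacle'' but do not resolve it; invoking Lemma~\ref{lem:VisibleVertex} directly with $x=r$ is not legitimate.

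The paper's fix is to work through the crossing point $z$ of $pq$ and $rs$. Since $z$ lies on both visibility edges, $z$ sees $p$ and $z$ sees $r$; and $z$, not being a vertex, is not the endpoint of any constraint. Lemma~\ref{lem:ConvexChain} applied to triangle $pzr$ then yields a convex chain of visibility edges from $p$ to $r$ inside that triangle, with the region bounded by $pz$, $zr$, and the chain free of constraints. Taking $v$ to be the neighbor of $p$ on this chain, $v$ is visible to $p$, lies in \C{p}{q} (since $pzr\subseteq\C{p}{q}$ by convexity), and lies in \reg{\C{p}{q}}{p}{q} because no constraint at $p$ can enter the constraint-free region between the segment $pz\subset pq$ and $pv$. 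Now Lemma~\ref{lem:VisibleVertex} applies with $x=v$ and delivers the contradiction. A smaller correction: your claim that \emph{each} homothet must contain an endpoint of the other chord is too strong (take \C{r}{s} tiny and entirely contained in \C{p}{q}); the paper only establishes, and only needs, that at least one of the four containments holds.
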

\begin{proof}
  We prove this by contradiction, so assume that there exist two edges $p q$ and $r s$ that intersect. It follows from Observation~\ref{lem:NoExtension} that neither $p$ nor $q$ lies on $r s$ and that neither $r$ nor $s$ lies on $p q$, so the edges properly intersect. Since $p q$ is contained in \C{p}{q} and $r s$ is contained in \C{r}{s}, $\partial\C{p}{q}$ and $\partial\C{r}{s}$ intersect or one of \C{p}{q} and \C{r}{s} contains the other. 

  \begin{figure}[ht]
    \begin{center}
      \includegraphics{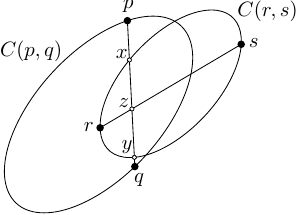}
    \end{center}
    \caption{$\partial\C{p}{q}$ and $\partial\C{r}{s}$ intersect and $p q$ intersects \C{r}{s} at $x$ and $y$.}
    \label{fig:DelaunayPlane}
  \end{figure}

  We first show that this implies that $p$ or $q$ lies in \C{r}{s} or $r$ or $s$ lies in \C{p}{q}. If  one of \C{p}{q} and \C{r}{s} contains the other, this holds trivially. If the two homothets intersect and either $p \in \C{r}{s}$ or $q \in \C{r}{s}$, we are done, so assume that neither $p$ nor $q$ lies in \C{r}{s}. Lemma~\ref{lem:HomothetIntersection} states that $\partial\C{p}{q}$ and $\partial\C{r}{s}$ intersect at most twice. These intersections split $\partial\C{p}{q}$ into two parts: one that is contained in \C{r}{s} and one that is not. Since $p \not \in \C{r}{s}$ and $q \not \in \C{r}{s}$, $p$ and $q$ lie on the arc of $\partial\C{p}{q}$ that is not contained in \C{r}{s} (see Figure~\ref{fig:DelaunayPlane}). However, $p q$ intersects \C{r}{s}, since otherwise $p q$ cannot intersect $r s$. Let $x$ and $y$ be the two intersections of $p q$ with $\partial\C{r}{s}$ (if $\partial\C{r}{s}$ is parallel to $p q$, $x$ and $y$ are the two endpoints of the interval of this intersection). We note that $x$ and $y$ split $\partial\C{r}{s}$ into two parts, one of which is contained in \C{p}{q}, and $r$ and $s$ must lie in different parts. In particular, one of $r$ and $s$ lies on the part that is contained in \C{p}{q}, proving that $r \in \C{p}{q}$ or $s \in \C{p}{q}$. This proves that $p \in \C{r}{s}$, $q \in \C{r}{s}$, $r \in \C{p}{q}$, or $s \in \C{p}{q}$. 
  
  In the remainder of the proof, we assume without loss of generality that $r \in \C{p}{q}$ (see Figure~\ref{fig:DelaunayPlane}). Let $z$ be the intersection of $p q$ and $r s$. Hence, $z$ can see both $p$ and $r$. Also, $z$ is not the endpoint of a constraint intersecting the interior of triangle $p z r$. Therefore, it follows from Lemma~\ref{lem:ConvexChain} that there exists a convex chain of visibility edges from $p$ to $r$. Let $v$ be the neighbor of $p$ along this convex chain. Since $v$ is part of the convex chain, which is contained in $p z r$, which in turn is contained in \C{p}{q}, it follows that $v$ is a vertex visible to $p$ contained in \C{p}{q}. Furthermore, since the polygon defined by $p z$, $z r$ and the convex chain does not contain any constraints, $v$ lies in \reg{\C{p}{q}}{p}{q}. Thus, it follows from Lemma~\ref{lem:VisibleVertex} that there exists a vertex in \C{p}{q} that is visible to both $p$ and $q$, contradicting that $p q$ is an edge of the \delGraph. 
\end{proof}

\subsection{Spanning Ratio}
\label{sec:SpanningRatioDelaunay}
Let $x$ and $y$ be two distinct points on the boundary $\partial C$ of $C$. These two points split $\partial C$ into two parts. For each of these parts, there exists an isosceles triangle with base $x y$ such that the third vertex lies on that part of $\partial C$. We denote the base angles of these two triangles by $\alpha_{x, y}$ and $\alpha'_{x, y}$. We define $\alpha_C$ as follows: 
\begin{equation}
\label{eq:Diamond}
\alpha_C = \min \{ \max (\alpha_{x, y}, \alpha'_{x, y}) : x,y \in \partial C,~x \neq y \}.
\end{equation}
Note that since this function is defined on a compact set, the minimum and maximum exist and this function is well-defined. Some examples of $\alpha_C$ are the following: When $C$ is a circle, $\alpha_C = \pi/4$, when $C$ is a rectangle where $l$ and $s$ are the length of its long and short side, $\alpha_C = \tan^{-1}(s/l)$, and when $C$ is an equilateral triangle, $\alpha_C = \pi/3$. 

Given a graph $G$ and an angle $0 < \alpha < \pi/2$, we say that an edge $p q$ of $G$ satisfies the \emph{$\alpha$-diamond property}, when at least one of the two isosceles triangles with base $p q$ and base angle $\alpha$ does not contain any vertex visible to both $p$ and $q$. A graph $G$ satisfies the $\alpha$-diamond property when all of its edges satisfy this property~\cite{DJ89}. 

\begin{lemma}
  \label{lem:DiamondProperty}
  Let $C$ be any convex shape. The \delGraph satisfies the $\alpha_C$-diamond property. 
\end{lemma}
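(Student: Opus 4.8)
The plan is to take an arbitrary edge $pq$ of the \delGraph and show that one of the two isosceles triangles with base $pq$ and base angle $\alpha_C$ contains no vertex visible to both $p$ and $q$. Since $pq$ is an edge, there is a homothet \C{p}{q} with $p$ and $q$ on its boundary whose interior contains no vertex visible to both $p$ and $q$. The two points $p, q$ split $\partial C$ (equivalently $\partial \C{p}{q}$, since the homothet is just a scaled translate) into two arcs, and by the very definition of $\alpha_C$ in~\eqref{eq:Diamond}, one of these two arcs admits an isosceles triangle with base $pq$, base angle at least $\alpha_C$, and apex on that arc — hence contained in the closed homothet \C{p}{q}. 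Call this triangle $T$ and its apex $a$, which lies on $\partial \C{p}{q}$.

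The first key step is to observe that the isosceles triangle $T'$ with base $pq$ and base angle \emph{exactly} $\alpha_C$ (on the same side as $T$) is contained in $T$, since $\alpha_C \le \alpha_{a}$ where $\alpha_a$ is the base angle of $T$; a larger base angle gives a taller, fatter isosceles triangle that contains the shorter one. So it suffices to show that $T$ contains no vertex visible to both $p$ and $q$. The second key step handles the interior of $T$: since $T \subseteq \C{p}{q}$ and the open homothet contains no vertex visible to both $p$ and $q$, no vertex in the interior of $T$ is visible to both. The remaining, delicate step is the boundary of $T$: vertices on the two slanted sides $pa$ and $qa$ or at the apex $a$ lie on $\partial \C{p}{q}$, not in its interior, so the emptiness hypothesis does not immediately rule them out.

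To dispose of the boundary case I would argue by contradiction: suppose a vertex $x \ne p, q$ lies on $\partial T$ (on side $pa$, side $qa$, or at $a$) and is visible to both $p$ and $q$. The idea is that $x$ then lies on $\partial \C{p}{q}$ in the closed region $\reg{\C{p}{q}}{p}{q}$ — or a suitable rotated copy of $C$ can be chosen — and we can perturb/shrink the homothet slightly so that $x$ moves strictly into the interior, or else directly produce a smaller homothet through $p$ and $q$ that has $x$ in its interior (or on its boundary on the relevant side), contradicting that $pq$ is an edge. More carefully, one uses Lemma~\ref{lem:HomothetIntersection}: the existence of a homothet \C{p}{q} with $x$ on its boundary yet no vertex visible to both $p$ and $q$ in its interior, combined with Lemma~\ref{lem:VisibleVertex} applied to $x$ in the appropriate region of a neighboring homothet, yields a vertex visible to both $p$ and $q$ strictly inside a homothet through $p$ and $q$, a contradiction. (In fact, since at most one vertex lies on each slanted side and no four vertices lie on any homothet boundary, the generic-position assumption rules out most of these configurations outright, leaving only a careful perturbation argument.)

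I expect the main obstacle to be precisely this boundary analysis: the clean statement "$T \subseteq \C{p}{q}$ and \C{p}{q} is empty of doubly-visible vertices, so $T$ is empty of doubly-visible vertices" only controls the open interior, and ruling out doubly-visible vertices sitting on $\partial T \cap \partial \C{p}{q}$ requires either exploiting the general-position assumption (no four cocircular-in-$C$ points) to reduce to isolated cases, or a perturbation argument showing that whenever such an $x$ exists one can slide to a genuinely smaller homothet through $p$ and $q$ containing a doubly-visible vertex in its open interior. Everything else — the containment $T' \subseteq T$ and the interior emptiness — is geometrically routine once $\alpha_C$ is unpacked from its definition.
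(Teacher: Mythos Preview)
Your approach is essentially the same as the paper's: pick the homothet $\C{p}{q}$ witnessing the edge, take the larger of the two inscribed isosceles triangles (base angle $\beta \ge \alpha_C$ by the definition of $\alpha_C$), and observe that the $\alpha_C$-triangle sits inside it and hence inside $\C{p}{q}$, which contains no vertex visible to both $p$ and $q$.

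The one difference is your extended boundary analysis. The paper does not engage with this at all: it simply states that $\C{p}{q}$ ``does not contain any vertices that are visible to both $p$ and $q$'' and concludes directly, implicitly folding the interior/boundary distinction into the standing general-position assumption (no four vertices on the boundary of any homothet). Your invocation of Lemmas~\ref{lem:HomothetIntersection} and~\ref{lem:VisibleVertex} and the perturbation argument are not needed here; the paper's proof is three short paragraphs of containment and convexity.
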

\begin{proof}  
  Let $p q$ be any edge of the \delGraph. Since $p q$ is an edge, there exists a \C{p}{q} such that \C{p}{q} does not contain any vertices that are visible to both $p$ and $q$. The vertices $p$ and $q$ split the boundary $\partial \C{p}{q}$ of \C{p}{q} into two parts and each of these parts defines an isosceles triangle with base $p q$. Let $\beta$ and $\gamma$ be the base angles of these two isosceles triangles and assume without loss of generality that $\beta \geq \gamma$ (see Figure~\ref{fig:DiamondProperty}). Let $x$ be the third vertex of the isosceles triangle having base angle $\beta$. 

  \begin{figure}[ht]
    \begin{center}
      \includegraphics{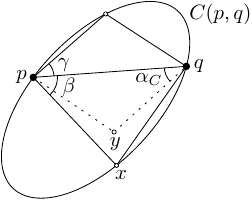}
    \end{center}
    \caption{The \delGraph satisfies the $\alpha_C$-diamond property.}
    \label{fig:DiamondProperty}
  \end{figure}

  Since $p \neq q$ and both lie on the boundary of \C{p}{q}, the pair $\{\beta, \gamma\}$ is one of the pairs considered when determining $\alpha_C$ in Equation~\ref{eq:Diamond}. Hence, since $\beta \geq \gamma$, it follows that $\alpha_C \leq \beta$. Let $y$ be the third point of the isosceles triangle having base $p q$ and base angle $\alpha_C$ that lies on the same side of $p q$ as triangle $p x q$ (see Figure~\ref{fig:DiamondProperty}). Since $\alpha_C \leq \beta$, triangle $p y q$ is contained in triangle $p x q$. By convexity of \C{p}{q}, $p x q$ is contained in \C{p}{q}. Hence, since \C{p}{q} does not contain any vertices visible to both $p$ and $q$, triangle $p y q$ does not contain any vertices visible to both $p$ and $q$ either. Hence, $p q$ satisfies the $\alpha_C$-diamond property. 
\end{proof}

For the next property, fix $O$ to be a point in the interior of $C$. Let $x$ and $y$ be two distinct points on $\partial C$, such that $x$, $y$, and $O$ are collinear. Again, $x$ and $y$ split $\partial C$ into two parts. Let $\ell_{x,y}$ and $\ell'_{x,y}$ denote the lengths of these two parts. We define $\kappa_{C,O}$ as follows: \[\kappa_{C,O} = \max \left\{ \frac{\max (\ell_{x,y}, \ell'_{x,y})}{|x y|} : x, y \in \partial C,~x \neq y, \text{ and } x, y, \text{ and } O \text{ are collinear} \right\}.\]

We note that the \delGraph does not depend on the location of $O$ inside $C$, as the presence of any edge $p q$ is defined in terms of \C{p}{q}, which does not depend on the location of $O$. Therefore, we define $\kappa_C$ as follows: \[\kappa_C = \min \{ \kappa_{C,O} : O \text{ is in the interior of } C \}.\] Throughout the remainder of this section, we assume that $O$ is picked such that $\kappa_C = \kappa_{C,O}$. We refer to this $O$ as the \emph{center} of $C$. Some examples of $\kappa_C$ are the following: When $C$ is a circle, $\kappa_C = \pi/2$ with $O$ being the center of $C$, when $C$ is a rectangle where $l$ and $s$ are the length of its long and short side, $\kappa_C = (l + s) / s = l/s + 1$ with $O$ being the center of $C$, and when $C$ is an equilateral triangle, $\kappa_C = \sqrt{3}$ with $O$ being the center of mass of $C$. 

Given a \delGraph $G$, let $p$ and $q$ be two vertices on the boundary of a face $f$ of the \delGraph, such that $p$ can see $q$ (i.e., $p q$ does not intersect any constraints) and the line segment $p q$ does not intersect the exterior of $f$. If for every such pair $p$ and $q$ on every face $f$, there exists a path in $G$ of length at most $\kappa \cdot |p q|$, then $G$ satisfies the \emph{visible-pair $\kappa$-spanner property}. We show that the \delGraph satisfies the visible-pair $\kappa_C$-spanner property. However, before we do this, we bound the length of the union of the boundary of a sequence of homothets that have their centers on a line. 

Let a set of $k + 1$ vertices $v_1, ..., v_{k+1}$ be given, such that all vertices lie on one side of the line through $v_1$ and $v_{k+1}$. For ease of exposition, assume the line through $v_1$ and $v_{k+1}$ is the $x$-axis and all vertices lie on or above this line. We consider only point sets for which there exists $C_1, ..., C_k$, a set of homothets of $C$, such that the center of each homothet lies on the $x$-axis, $C_i$ has $v_i$ and $v_{i+1}$ on its boundary, and no $C_i$ contains any vertices other than $v_i$ and $v_{i+1}$, for all $i \in \{ 1, ..., k-1 \}$. Let $\partial C^+$ be the boundary of $C$ above the $x$-axis and let $\partial (v_i, v_{i+1})$ be the part of $\partial C^+_i$ between $v_i$ and $v_{i+1}$. 

\begin{lemma}
  \label{lem:UpperHull}
  Let \C{v_1}{v_{k+1}} be the homothet of $C$ with $v_1$ and $v_{k+1}$ on its boundary and its center on the $x$-axis. It holds that \[\sum_{i=1}^k |\partial (v_i, v_{i+1})| \leq |\partial C^+(v_1, v_{k+1})|.\]
\end{lemma}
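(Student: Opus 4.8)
The plan is to prove the inequality by induction on $k$. The base case $k=1$ is immediate: $C_1$ is a homothet of $C$ with $v_1$ and $v_{k+1}=v_2$ on its boundary and its center on the $x$-axis, hence $C_1 = \C{v_1}{v_2}$, and $\partial(v_1,v_2)$ is exactly $\partial\C{v_1}{v_{k+1}}$, so equality holds. For the inductive step with $k \ge 2$, the idea is a \emph{merging step}: replace the first two homothets $C_1$ and $C_2$ (which share the vertex $v_2$) by a single homothet $C'$ that has its center on the $x$-axis, has $v_1$ and $v_3$ on its boundary, and whose upper boundary arc $\partial'(v_1,v_3)$ satisfies $|\partial'(v_1,v_3)| \ge |\partial(v_1,v_2)| + |\partial(v_2,v_3)|$. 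If in addition the shortened sequence $v_1, v_3, v_4, \dots, v_{k+1}$ together with $C', C_3, \dots, C_k$ still satisfies the hypotheses of the lemma, then the inductive hypothesis gives $|\partial'(v_1,v_3)| + \sum_{i=3}^{k}|\partial(v_i,v_{i+1})| \le |\partial\C{v_1}{v_{k+1}}|$, and combining this with the length inequality above completes the step.

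To carry out the merge, I would construct $C'$ by a continuous deformation. Consider the one-parameter family of homothets of $C$ that keep $v_1$ on their boundary and keep their center on the $x$-axis, parametrised by the position of the center; $C_1$ is one member, and a sufficiently large member contains $v_3$ while $C_1$ does not, so by the intermediate value theorem some member $C'$ has $v_3$ on its boundary. The observation that drives the length comparison is that any homothet of $C$ passing through the \emph{axis point} $v_1$ with center on the $x$-axis is tangent at $v_1$ to every other such homothet (their normals at $v_1$ are both directed along the axis); hence, together with Lemma~\ref{lem:HomothetIntersection}, two such homothets cannot cross, so the family is nested and in particular $C_1 \subseteq C'$ and $\partial(v_1,v_2) \subseteq \partial C_1 \subseteq C'$. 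One then argues that the concatenation $\Gamma'$ of $\partial(v_1,v_2)$ and $\partial(v_2,v_3)$ is a convex curve from $v_1$ to $v_3$ — each piece is a convex arc, and emptiness of $C_1$ and $C_2$ (which gives $v_3 \notin C_1$ and $v_1 \notin C_2$) forces the turn at $v_2$ to be convex — so $\Gamma'$ together with the segment $v_1v_3$ bounds a convex region contained in the convex region bounded by $\partial'(v_1,v_3)$ and $v_1v_3$. Monotonicity of perimeter under inclusion of convex bodies then yields $|\Gamma'| \le |\partial'(v_1,v_3)|$, which is the desired length inequality.

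The main obstacle is the bookkeeping hidden in ``the shortened sequence still satisfies the hypotheses'': $C'$ is strictly larger than $C_1$, and it need not be contained in $C_1 \cup C_2$, so one cannot simply inherit emptiness with respect to $v_4, \dots, v_{k+1}$, and one must also check that the arc $\partial(v_2,v_3)$ (a piece of $\partial C_2$, which may cross $\partial C'$) actually stays inside $C'$. Here I expect to rely again on Lemma~\ref{lem:HomothetIntersection}: since $C' \supseteq C_1$ are tangent at $v_1$, their boundaries meet only at $v_1$, which tightly constrains where $C'$ can protrude beyond $C_1 \cup C_2$ — essentially only ``upward'' over these two homothets — and the remaining homothets $C_3, \dots, C_k$, each empty with its two defining vertices on or above the axis and its center on the axis, already preclude $v_4, \dots, v_{k+1}$ from lying in that protrusion and force $\partial(v_2,v_3)$ to remain inside $C'$. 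If this geometric control turns out to fail for the leftmost pair in some configuration, the fallback is to choose the pair to merge adaptively (from whichever end the tangency argument applies cleanly) or to strengthen the inductive statement to also assert that $\Gamma$ is a convex curve contained in $\C{v_1}{v_{k+1}}$; with that strengthening in hand, the final inequality follows at once because a convex curve enclosed between the chord $v_1v_{k+1}$ and the arc $\partial\C{v_1}{v_{k+1}}$ has length at most $|\partial\C{v_1}{v_{k+1}}|$.
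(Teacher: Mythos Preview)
Your merging strategy is quite different from the paper's, and as written it has two real gaps. First, the claim that the emptiness conditions $v_3\notin C_1$ and $v_1\notin C_2$ force the turn of $\Gamma'$ at $v_2$ to be convex is false: with $C$ a disk, $v_1=(0,0)$, $v_2=(1,1)$, $v_3=(10,0)$, both emptiness conditions hold, yet $\partial(v_1,v_2)$ arrives at $v_2$ with horizontal tangent while $\partial(v_2,v_3)$ leaves $v_2$ in direction proportional to $(9,40)$, a left turn, so $\Gamma'$ has a reflex corner and your perimeter-monotonicity argument does not apply. Your fallback of strengthening the induction hypothesis to ``$\Gamma$ is convex'' therefore fails as well. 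Second, even granting the length inequality for a single merge, applying the induction hypothesis to $C',C_3,\dots,C_k$ requires that $C'$ contain none of $v_4,\dots,v_{k+1}$; since $C'$ genuinely protrudes beyond $C_1\cup C_2$ (in the crescent between $\partial C_1$ and $\partial C'$), this is a nontrivial assertion that your tangency-and-Lemma~\ref{lem:HomothetIntersection} sketch does not establish.

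The paper sidesteps both issues with a different idea: every homothet of $C$ centered on the $x$-axis has the \emph{same} ratio $c=|\partial C(v_1,v_{k+1})|/|v_1v_{k+1}|$ of upper-arc length to axis-span, so arc lengths become proportional to axis-intercept distances. The induction peels off $C_k$ rather than merging: with $r$ the rightmost axis-point of $C_{k-1}$ and $l$ the leftmost axis-point of $C_k$, induction gives $|\partial(v_1,v_k)|+|\partial(v_k,r)|\le c\,|v_1r|$, while $|\partial(l,v_k)|+|\partial(v_k,v_{k+1})|=c\,|lv_{k+1}|$ holds directly. If $l\ge r$ these already add to at most $c\,|v_1v_{k+1}|$; if $l<r$, the excess $c\,|lr|=|\partial C(l,r)|$ is at most $|\partial(l,v_k)|+|\partial(v_k,r)|$ because the small homothet $C(l,r)$ lies inside $C_{k-1}\cap C_k$ and convex containment bounds its perimeter. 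No merged homothet is ever built, so no new emptiness condition needs verifying.
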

\begin{proof}
  We prove the lemma by induction on $k$, the number of homothets. If $k = 1$, $\partial (v_1, v_2)$ is $\partial C^+(v_1, v_2)$, so the lemma holds. 

  If $k > 1$, we assume that the induction hypothesis holds for all sets of at most $k-1$ homothets. Since homothet $C_i$ does not contain any vertices other than $v_i$ and $v_{i+1}$, it follows that none of the homothets are fully contained in the union of the other homothets.

  Order the homothets by increasing value of their right intersection point with the $x$-axis. Thus, $C_1$ has the smallest right intersection point and $C_k$ has the largest. Let $r$ be the right intersection point of $C_{k-1}$ and let $l$ be the left intersection point of $C_k$ and the $x$-axis (see Figure~\ref{fig:UpperHull}). Let $\partial (v_1, v_k) = \bigcup_{i=1}^{k-1} \partial (v_i, v_{i+1})$ and let $\partial (v_k, r)$ be the part of $\partial C^+_{k-1}$ between $v_k$ and $r$. Let $\partial (l, v_k)$ be the part of $\partial C^+_k$ between $l$ and $v_k$. Since $\sum_{i=1}^k |\partial (v_i, v_{i+1})| = |\partial (v_1, v_k)| + |\partial (v_k, v_{k+1})|$, to prove the lemma, we show that $|\partial (v_1, v_k)| + |\partial (v_k, v_{k+1})| \leq |\partial C^+(v_1, v_{k+1})|$. 

  \begin{figure}[ht]
    \begin{center}
      \includegraphics{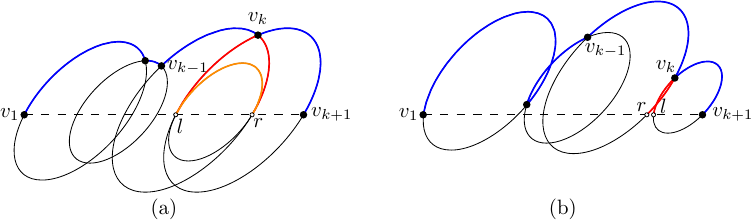}
    \end{center}
    \caption{The partial boundaries $\partial (v_1, v_k)$ and $\partial (v_k, v_{k+1})$ (blue), $\partial (l, v_k)$ and $\partial (v_k, r)$ (red), and $\partial \C{l}{r}$ (orange): (a) $l$ lies to the left of $r$, (b) $l$ lies on or to the right of $r$.}
    \label{fig:UpperHull}
  \end{figure}

  Let $c = |\partial C^+(v_1, v_{k+1})| / |v_1 v_{k+1}|$, so $|\partial C^+(v_1, v_{k+1})| = c \cdot |v_1 v_{k+1}|$. Since \newline $|\partial (v_1, v_k)| + |\partial (v_k, r)| = |\partial (v_1, v_{k-1})| + |\partial (v_{k-1}, r)|$, it follows from the induction hypothesis that $|\partial (v_1, v_k)| + |\partial (v_k, r)| = |\partial (v_1, v_{k-1})| + |\partial (v_{k-1}, r)| \leq |\partial C^+(v_1, r)| = c \cdot |v_1 r|$. Since the center of $C_k$ lies on the $x$-axis, it follows that $|\partial C_k^+| = |\partial (l, v_k)| + |\partial (v_k, v_{k+1})| = c \cdot |l v_{k+1}|$. We consider two cases: (a) $l$ lies to the left of $r$, (b) $l$ lies on or to the right of $r$.

  \textbf{Case (a):} If $l$ lies to the left of $r$, let \C{l}{r} be the homothet centered on the $x$-axis with $l$ and $r$ on its boundary (see Figure~\ref{fig:UpperHull}a). Hence, it follows that $|\partial C^+(l, r)| = c \cdot |l r|$. Since \C{l}{r} has $l$ and on its left boundary, it is contained in $C_k$, and since it has $r$ on its right boundary, it is contained in $C_{k-1}$. Hence, \C{l}{r} is contained in the intersection of $C_{k-1}$ and $C_k$. Since the length of the boundary of this intersection above the $x$-axis is $|\partial (l, v_k)| + |\partial (v_k, r)|$ and \C{l}{r} is convex, it follows that $|\partial C^+(l, r)| \leq |\partial (l, v_k)| + |\partial (v_k, r)|$. Hence, we have that 
  \begin{eqnarray*}
    \sum_{i=1}^k |\partial (v_i, v_{i+1})| &=& |\partial (v_1, v_k)| + |\partial (v_k, v_{k+1})| \\
    &=& |\partial (v_1, v_k)| + |\partial (v_k, r)| - |\partial (v_k, r)| + |\partial C^+_k| - |\partial (l, v_k)| \\
    &\leq& c \cdot |v_1 r| - |\partial (v_k, r)| + c \cdot |l v_{k+1}| - |\partial (l, v_k)| \\
    &=& c \cdot |v_1 v_{k+1}| + c \cdot |l r| - |\partial (l, v_k)| - |\partial (v_k, r)| \\
    &=& |\partial C^+(v_1, v_{k+1})| + |\partial C^+(l, r)| - |\partial (l, v_k)| - |\partial (v_k, r)| \\
    &\leq& |\partial C^+(v_1, v_{k+1})|.
  \end{eqnarray*}

  \textbf{Case (b):} If $l$ lies on or to the right of $r$ (see Figure~\ref{fig:UpperHull}b), we have that
  \begin{eqnarray*}
    \sum_{i=1}^k |\partial (v_i, v_{i+1})| &=& |\partial (v_1, v_k)| + |\partial (v_k, v_{k+1})| \\
    &\leq& |\partial (v_1, v_k)| + |\partial (v_k, r)| + |\partial (l, v_k)| + |\partial (v_k, v_{k+1})| \\
    &\leq& c \cdot |v_1 r| + c \cdot |l v_{k+1}| \\
    &\leq& c \cdot |v_1 v_{k+1}| \\
    &=& |\partial C^+(v_1, v_{k+1})|,
  \end{eqnarray*}
  completing the proof. 
\end{proof}

\begin{lemma}
  \label{lem:VisiblePairSpanner}
  The \delGraph satisfies the visible-pair $\kappa_C$-spanner property. 
\end{lemma}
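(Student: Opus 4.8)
The plan is to reduce the statement to Lemma~\ref{lem:UpperHull}. It will be convenient to take the origin used to define homothets of $C$ to be the point $O$ for which $\kappa_C=\kappa_{C,O}$, so that the centre of any homothet $x+\lambda C$ is precisely the image of $O$. Now fix a face $f$ of the \delGraph and two vertices $p,q$ on $\partial f$ such that $pq$ is a visibility edge that does not meet the exterior of $f$; after a rotation, assume $pq$ is horizontal. The segment $pq$, together with a portion of $\partial f$ lying to one side of it, bounds a region $R\subseteq\overline f$; let $p=u_0,u_1,\dots,u_m=q$ be the corresponding sequence of edges of $\partial f$. Since $R\subseteq\overline f$ and $f$ is a face, $R$ contains no vertex of $P$ in its interior, and each $u_ju_{j+1}$ is an edge of the \delGraph, so $u_0u_1\cdots u_m$ is a walk in $G$ from $p$ to $q$; hence it suffices to prove $\sum_{j=0}^{m-1}|u_ju_{j+1}|\le\kappa_C|pq|$.

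The second step is to replace, for each $j$, the chord $u_ju_{j+1}$ by the arc it subtends in a homothet to which Lemma~\ref{lem:UpperHull} applies. Because $u_ju_{j+1}$ is an edge, there is a homothet with $u_j$ and $u_{j+1}$ on its boundary that contains no vertex visible to both $u_j$ and $u_{j+1}$ in its interior; I would show that one can in fact choose such a homothet $C_j$ with its centre on the line through $p$ and $q$ and containing \emph{no} vertex of $P$ in its interior. The part of $C_j$ on the far side of that line from $R$ stays in $\overline f$ provided the chain is taken on the correct side of $pq$ — this is exactly where the hypothesis that $pq$ avoids the exterior of $f$ is used — and the part of $C_j$ on the $R$-side is controlled by the emptiness of $R$; to upgrade ``no vertex visible to both $u_j$ and $u_{j+1}$'' to ``no vertex of $P$ at all'' one reruns the argument from the planarity proof: a vertex of the cone \reg{C}{u_j}{u_{j+1}} visible to $u_j$ would, by Lemma~\ref{lem:VisibleVertex}, force a vertex visible to both $u_j$ and $u_{j+1}$, a contradiction, while any constraint crossing $R$ is routed around using the convex chain of Lemma~\ref{lem:ConvexChain}. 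This homothet‑replacement step is the one I expect to be the main obstacle: in the unconstrained setting one can certify a region vertex‑free by inspection, whereas here one must combine the face structure with Lemmas~\ref{lem:ConvexChain} and~\ref{lem:VisibleVertex} and also argue carefully that the required homothets $C_j$ exist.

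With the $C_j$ in hand, Lemma~\ref{lem:UpperHull} gives $\sum_j|\partial(u_j,u_{j+1})|\le|\partial\C{p}{q}|$, where $\partial(\cdot)$ denotes the portion of the relevant boundary on the $R$-side of the line through $p$ and $q$, and $\C{p}{q}$ is the homothet with $p$ and $q$ on its boundary whose centre lies on that line. Each chord is at most the arc it subtends, so $\sum_j|u_ju_{j+1}|\le\sum_j|\partial(u_j,u_{j+1})|\le|\partial\C{p}{q}|$. Finally, the centre of $\C{p}{q}$ is the image of $O$ and lies on the line through $p$ and $q$, hence in the interior of $\C{p}{q}$ and therefore on the segment $pq$ itself; thus $pq$ is a chord of $\C{p}{q}$ through its $O$‑image, and the definition of $\kappa_{C,O}=\kappa_C$ yields $|\partial\C{p}{q}|\le\kappa_C|pq|$. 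Chaining the inequalities gives $\delta_G(p,q)\le\sum_j|u_ju_{j+1}|\le\kappa_C|pq|$, which is the visible‑pair $\kappa_C$‑spanner property.
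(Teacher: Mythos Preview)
Your approach diverges from the paper's and has a genuine gap at the step you yourself flag as the main obstacle. You take as your path the portion of $\partial f$ from $p$ to $q$ on one side of $pq$, and then try to find, for each face edge $u_ju_{j+1}$, a homothet $C_j$ of $C$ with its centre on the line through $p$ and $q$ and with $u_j,u_{j+1}$ on its boundary. But such a homothet need not exist: for instance, when $C$ is a disk and $u_ju_{j+1}$ is perpendicular to $pq$ with both endpoints strictly above that line, the perpendicular bisector of $u_ju_{j+1}$ is parallel to $pq$ and never meets it, so no circle centred on that line passes through both points. Nothing about the face boundary prevents such edges. Even when $C_j$ does exist, your claim that its far side ``stays in $\overline f$'' is unsupported: if $u_j$ and $u_{j+1}$ are far from $pq$, the homothet centred on that line can be very large and may well protrude outside $f$ on the far side, and on the $R$-side it may extend beyond $R$ into other faces. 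So the hypothesis of Lemma~\ref{lem:UpperHull} (that each $C_i$ contains no other $v_j$) is not verified.

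The paper avoids both difficulties by constructing the path rather than taking it from $\partial f$. It grows a homothet $C'$ with centre sliding along $pq$, first anchored at $p$ and then at each successive vertex it meets, recording the first visible vertex satisfying a cone condition; this \emph{by construction} produces consecutive vertices that lie on the boundary of a homothet centred on $pq$, and the ``first hit'' rule guarantees that these homothets contain no earlier or later path vertices. A subtlety the paper handles and you do not is that consecutive vertices in this sweep may be joined only by a constraint, not an edge; in that case the construction is refined recursively between them until an actual edge-path is obtained. Only then is Lemma~\ref{lem:UpperHull} invoked. Your final chain of inequalities (chord $\le$ arc, then Lemma~\ref{lem:UpperHull}, then the definition of $\kappa_C$) is correct once the homothets are in hand, but getting them requires the sweep construction, not the face boundary.
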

\begin{proof}
  Let $p$ and $q$ be two vertices on the boundary of a face $f$ of the \delGraph, such that $p$ can see $q$ and the line segment $p q$ does not intersect the exterior of $f$. Assume without loss of generality that $p q$ lies on the $x$-axis. Let \C{p}{q} be the homothet of $C$ with $p$ and $q$ on its boundary and its center on $p q$. We aim to show that there exists a path between $p$ and $q$ of length at most $\kappa_C \cdot |p q|$. Since by definition $\kappa_C$ is at least $|\partial C^+(p, q)| / |p q|$, showing that there exists a path between $p$ and $q$ of length at most $|\partial C^+(p, q)|$ completes the proof. If $p q$ is an edge of the \delGraph, this follows from the triangle inequality, so assume this is not the case. 

  We grow a homothet $C'$ with its center on $p q$ by moving its center from $p$ to $q$, while maintaining that $p$ lies on the boundary of $C'$ (see Figure~\ref{fig:ConstructingPath}a). Let $v_1$ be the first vertex hit by $C'$ that is visible to $p$ and lies in \reg{\C{p}{q}}{p}{q}. We assume without loss of generality that $v_1$ lies above $p q$. Since $v_1$ is the first vertex satisfying these conditions, $p v_1$ is either an edge or a constraint: Since $v_1$ is the first visible vertex we hit in \reg{\C{p}{q}}{p}{q}, we have that $\reg{\C{p}{q}}{p}{q} \cap C'$ contains no vertices visible to $p$. Hence, there is no vertex visible to both $p$ and $v_1$. Therefore, Lemma~\ref{lem:VisibleVertex} implies that $\reg{\C{p}{q}}{p}{q} \cap C'$ does not contain any vertices visible to $v_1$. Hence, if $p v_1$ is not a constraint, the region that is visible to both $p$ and $v_1$ does not contain any vertices and $p v_1$ is an edge of the \delGraph. 

  \begin{figure}[ht]
    \begin{center}
      \includegraphics{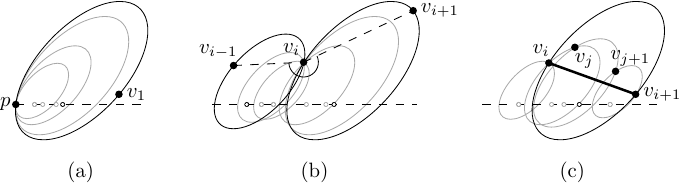}
    \end{center}
    \caption{Constructing a path from $p$ to $q$: (a) growing $C'$ from $p$, (b) growing $C'$ while maintaining that $v_i$ lies on its boundary, (c) refining when $v_i v_{i+1}$ is a constraint.}
    \label{fig:ConstructingPath}
  \end{figure}

  We continue constructing a sequence of vertices $p, v_1, v_2, ..., v_k, q$ until we hit $q$ by moving the center of $C'$ along $p q$ towards $q$ and each time we hit a vertex $v_i$, we require that it lies on the boundary of $C'$ until we hit the next vertex $v_{i+1}$ that is visible to $v_i$ and $v_i$ is not the endpoint of a constraint that lies in the counterclockwise angle $\angle v_{i-1} v_i v_{i+1}$ (see Figure~\ref{fig:ConstructingPath}b). Since $v_{i+1}$ is the first vertex satisfying these conditions starting from $v_i$, we know that $v_i v_{i+1}$ is either an edge or a constraint by the same argument used above to show that $p v_1$ is an edge or a constraint. Since $p q$ is visible and does not intersect the exterior of the face $f$, this in turn implies that these vertices all lie above $p q$. 

  Unfortunately, we cannot assume that there exists an edge between every pair of consecutive vertices: If $v_i v_{i+1}$ is a constraint, there can be vertices visible to both $v_i$ and $v_{i+1}$ on the opposite side of the constraint. For pairs of \mbox{vertices $v_i, v_{i+1}$} that do not form an edge, we refine the construction of the sequence \mbox{between them:} We start with $C'$ such that it does not cross $v_i v_{i+1}$ and $v_i$ lies on its boundary. We construct a sequence of vertices from $v_i$ to $v_{i+1}$ by moving the center of $C'$ along $p q$ towards $q$, maintaining that $v_i$ lies on its boundary (see Figure~\ref{fig:ConstructingPath}c). For the first vertex we hit, we require that it is visible to $v_i$ and lies in \reg{C'}{v_i}{v_{i+1}}. 

  We continue moving the center of $C'$ along $p q$ towards $q$, but we now maintain that $v_i$ lies on the boundary of $C'$. Each time we hit a vertex $v_j$, we require that it lies on the boundary of $C'$ until we hit the next vertex $v_{j+1}$ that is visible to $v_j$ and $v_j$ is not the endpoint of a constraint that lies in the counterclockwise angle $\angle v_{j-1} v_j v_{j+1}$. In other words, we construct a more fine-grained sequence when consecutive vertices define a constraint and there is no edge between them. Note that we may need to repeat this process a number of times, since there need not be edges between the vertices of the finer grained sequence either. However, since the point set is finite, this process terminates. 

  This way, we end up with a path $p, v_1, v_2, ..., v_l, q$ from $p$ to $q$ that lies above $p q$. Furthermore, since $C$ is convex, we can upper bound the length of each edge $v_i v_{i+1}$ by the part of $\partial \C{v_i}{v_{i+1}}$, the homothet with $v_i$ and $v_{i+1}$ on its boundary and its center on $p q$, that does not intersect $p q$. Hence, the total length of the path is upper bounded by the length of the union of the boundaries of these homothets above $p q$. By construction, none of the homothets corresponding to consecutive vertices along the path contain any of the other vertices along the path. Hence, we can apply Lemma~\ref{lem:UpperHull} and it follows that the total length of the path is at most $|\partial C^+(p, q)|$, completing the proof. 
\end{proof}

A path between two vertices $p$ and $q$ is called \emph{one-sided} if all vertices along this path lie above the line through $p q$ or all vertices lie below the line through $p q$. Since the path constructed in Lemma~\ref{lem:VisiblePairSpanner} is one-sided, we get the following corollary. 

\begin{corollary}
\label{coro:OneSided}
  The spanning ratio of any one-sided path in $G$ is at most $\kappa_C$.
\end{corollary}

We are now ready to prove that the \delGraph is a spanner. Das and Joseph~\cite{DJ89} showed that any plane graph that satisfies the $\alpha$-diamond property and the good polygon property (similar to the visible-pair $\kappa$-spanner property) is a spanner. Subsequently, Bose~\etal~\cite{BLS07} improved slightly on the spanning ratio. They showed that a geometric (constrained) graph $G$ is a spanner of the visibility graph when it satisfies the following properties: 
\begin{enumerate}
  \item $G$ is plane. 
  \item $G$ satisfies the $\alpha$-diamond property.
  \item The spanning ratio of any one-sided path in $G$ is at most $\kappa$.
  \item $G$ satisfies the visible-pair $\kappa'$-spanner property. 
\end{enumerate}
In particular, $G$ is a $t$-spanner for \[t = 2 \kappa \kappa' \cdot \max \left( \frac{3}{\sin(\alpha/2)}, \kappa \right).\]

It follows from Lemmas~\ref{lem:PlaneDelaunay}, \ref{lem:DiamondProperty}, and~\ref{lem:VisiblePairSpanner}  and Corollary~\ref{coro:OneSided} that the \delGraph satisfies these four properties. Moreover, even though in general the \delGraph is not a triangulation, if for a specific \circle it is, it satisfies the visible-pair 1-spanner property: Since every face consists of three vertices that are pairwise connected by an edge, the shortest path between two vertices $p$ and $q$ on this face has length $1 \cdot |p q|$. Therefore, we obtain the following theorem: 

\begin{theorem}
  \label{theo:GeneralizedDelaunaySpanningRatio}
  The \delGraph $G$ is a $t$-spanner of $\Vis(P,S)$ for 
  \begin{equation*}
    t = 
    \begin{cases}
      2 \kappa_C \cdot \max \left( \frac{3}{\sin(\alpha_C/2)}, \kappa_C \right), & \text{ \emph{if $G$ is a triangulation}} \\
      2 \kappa_C^2 \cdot \max \left( \frac{3}{\sin(\alpha_C/2)}, \kappa_C \right), & \text{ \emph{otherwise.}}
    \end{cases}
  \end{equation*}
\end{theorem}

Though this theorem holds for all convex shapes, the bound it provides is rather loose when we look at a specific shape. For example, for the constrained Delaunay graph that uses an equilateral triangle, the above theorem implies an upper bound of $2 \cdot \sqrt{3}^2 \cdot 3/\sin(\pi/6) = 36$, which is far greater than the tight bound of 2~\cite{BFRV12}. For circles, the best known upper bound is 1.998~\cite{X13}, while Theorem~\ref{theo:GeneralizedDelaunaySpanningRatio} implies a ratio of $\pi \cdot 3 \cdot \sqrt{4 + 2\sqrt{2}} \approx 24.63$, since these graphs are triangulations. For squares, Theorem~\ref{theo:GeneralizedDelaunaySpanningRatio} implies a ratio of $24 \cdot \sqrt{4 + 2\sqrt{2}}  \approx 62.72$ compared to the tight ratio of 2.61 in the unconstrained setting~\cite{BGHP12} and for rectangles, we get an upper bound of $2 \cdot (l/s + 1)^2 \cdot 3 \sqrt{l^2 + s^2} / s > 2 \cdot (l/s + 1)^3$.

\section[The Constrained Empty-Rectangle Delaunay Graph]{The Constrained Empty-Rectangle Delaunay Graph}
\label{sec:rectangles}
In this section, we look at the case where the empty \circle is an arbitrary rectangle and reduce the dependency of the spanning ratio on the aspect ratio from cubic, which is implied by Theorem~\ref{theo:GeneralizedDelaunaySpanningRatio}, to linear. To this end, we first take a closer look at empty visibility regions in the convex shapes. Next, we take a closer look at Lemma~\ref{lem:UpperHull} for the case of rectangles, as it will be convenient to explicitly argue about the lengths of the edges of the spanning path in terms of the sides of the rectangle. Following this, we use these two lemmas to bound the length of a path when \C{p}{q} is known to be partially empty. Finally, we use this latter lemma to arrive at the desired result. 

We assume without loss of generality that the rectangle is axis-aligned. We do not, however, assume anything about the ratio between the height and width of the rectangle. We first prove an auxiliary lemma that will be used to show that certain regions of the rectangles are empty. In the interest of possible future use, we prove this lemma for an arbitrary convex shape and apply it only to the case of rectangles in our proof. 

\begin{lemma}
  \label{lem:VisibleBelowPQ}
  Let $p$ and $q$ be two vertices that can see each other and let \C{p}{q} be any convex shape with $p$ and $q$ on its boundary. Let $H_1$ and $H_2$ be the intersection of \C{p}{q} with the two half-planes defined by the line through $pq$, respectively. If there exists a  point $x$ in $H_2$ can see a vertix in $H_1$, then $H_1$ contains a vertex visible to $p$ and $q$. 
\end{lemma}
\begin{proof}
  We assume without loss of generality that $p q$ is not vertical. We also assume without loss of generality that $H_1$ is the intersection of \C{p}{q} with the half-plane below the line through $pq$ and $H_2$ is the intersection of \C{p}{q} with the half-plane above the line through $pq$. We prove the lemma by contradiction, so assume that there exists a vertex $y$ in \C{p}{q} below $p q$ that is visible to $x$, but not to $p$ and $q$. Since \C{p}{q} is a convex shape and $x$ and $y$ lie on opposite sides of $p q$, the visibility edge $x y$ intersects $p q$. Let $z$ be this intersection (see Figure~\ref{fig:VisibleBelowPQ}). 

  \begin{figure}[ht]
    \begin{center}
      \includegraphics{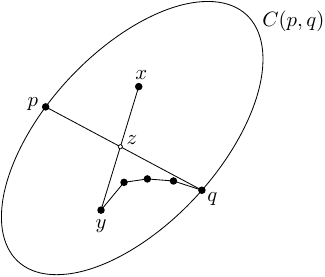}
    \end{center}
    \caption{If $x$ can see a vertex below $p q$, then so can $q$.}
    \label{fig:VisibleBelowPQ}
  \end{figure}

  Hence, $z y$ and $z q$ are visibility edges. Since $z$ is not a vertex, it is not the endpoint of any constraints intersecting the interior of triangle $y z q$. It follows from Lemma~\ref{lem:ConvexChain} that there exists a convex chain of visibility edges between $y$ and $q$ and this chain is contained in $y z q$. However, this implies that $w$, the neighbor of $q$ along this chain, is visible to $q$ and lies in \C{p}{q} below $p q$. Next, we apply Lemma~\ref{lem:ConvexChain} on triangle $p q w$ and find that the neighbor of $p$ along the chain from $p$ to $w$ is visible to both $p$ and $q$ and lies in \C{p}{q} below $p q$, contradicting that this region does not contain any vertices visible to $p$ and $q$. 
\end{proof}

Next, we revisit Lemma~\ref{lem:UpperHull}, since in the remainder of the spanning proof, it is convenient to explicitly argue about the lengths of the edges of the spanning path in terms of the sides of the rectangle. 

We first introduce some notation for the following lemma. Let $p$ and $q$ be two vertices of the \delGraph that can see each other. Let $R$ be a rectangle with $p$ and $q$ on its West and East boundary and let $a$, $b$, and $r$ be the Northwest, Northeast, and Southwest corner of $R$. Let $m_1, ..., m_{k-1}$ be any $k-1$ points on $p q$ in the order they are visited when walking from $p$ to $q$ (see Figure~\ref{fig:SummingRectangles}). Let $m_0 = p$ and $m_k = q$. Consider the homothets $S_i$ of $R$ with $m_i$ and $m_{i+1}$ on their respective boundaries, for $0 \leq i < k$, such that $|p a| / |r a| = |m_i a_i| / |r_i a_i|$, where $a_i$, $b_i$, $r_i$ are the Northwest, Northeast, and Southwest corner of $S_i$. 

\begin{figure}[ht]
  \begin{center}
    \includegraphics{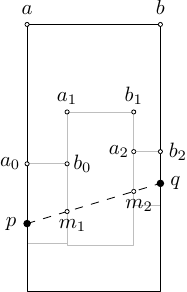}
  \end{center}
  \caption{The total length of the sides of the rectangles $S_i$ equals that of \C{p}{q}.}
  \label{fig:SummingRectangles}
\end{figure}

\begin{lemma}
  \label{lem:SummingRectangles}
   We have \[\sum_{i=0}^{k-1} \big( |m_i a_i| + |a_i b_i| + |b_i m_{i+1}| \big) = |p a| + |a b| + |b q|.\]
\end{lemma}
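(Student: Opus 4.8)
The plan is to place everything in coordinates in which $R$ is axis-aligned with vertical West and East sides, write $p=(x_p,y_p)$, $q=(x_q,y_q)$, and $m_i=(x_i,y_i)$ for $0\le i\le k$ (so $x_0=x_p$, $x_k=x_q$, etc.), and then translate each length appearing in the statement into a coordinate difference. Because $p$ lies on the West side and $q$ on the East side of $R$, we have $x_a=x_p$ and $x_b=x_q$, hence $|ab|=x_q-x_p$; likewise $|pa|=y_a-y_p$, and since $y_b=y_a$ also $|bq|=y_a-y_q$. I will also attach to each $S_i$ the scalar $\lambda_i>0$ such that $S_i$ is the image of $R$ under a homothety of ratio $\lambda_i$ — this is legitimate because, in this paper's sense, every homothet of an axis-aligned rectangle is a positively scaled, translated copy of it, so all the $S_i$ have the same aspect ratio and orientation as $R$.

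Next I would evaluate the $i$-th summand. Since the West side of $S_i$ is vertical, $m_i$ and $a_i$ share an $x$-coordinate and $|m_ia_i|=y_{a_i}-y_i$; the defining condition $|pa|/|ra|=|m_ia_i|/|r_ia_i|$ says exactly that $m_i$ sits at the same relative height on the West side of $S_i$ as $p$ does on that of $R$, so $|m_ia_i|=\lambda_i|pa|$ and $|a_ib_i|=\lambda_i|ab|$. For the third leg, $b_i$ and $m_{i+1}$ share an $x$-coordinate and $y_{b_i}=y_{a_i}$, so $|b_im_{i+1}|=y_{a_i}-y_{i+1}=(y_{a_i}-y_i)+(y_i-y_{i+1})=|m_ia_i|+(y_i-y_{i+1})$. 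Combining, the $i$-th summand equals $2|m_ia_i|+|a_ib_i|+(y_i-y_{i+1})=\lambda_i\big(2|pa|+|ab|\big)+(y_i-y_{i+1})$.

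Now I would sum over $i=0,\dots,k-1$. The terms $y_i-y_{i+1}$ telescope to $y_p-y_q$. For the coefficients, the width identity $|a_ib_i|=x_{i+1}-x_i$ together with $|a_ib_i|=\lambda_i|ab|$ forces $\lambda_i=(x_{i+1}-x_i)/|ab|$, so $\sum_{i=0}^{k-1}\lambda_i=(x_q-x_p)/|ab|=1$. Hence the whole sum equals $\big(2|pa|+|ab|\big)+(y_p-y_q)$, and substituting $y_p-y_q=(y_a-|pa|)-(y_a-|bq|)=|bq|-|pa|$ turns this into exactly $|pa|+|ab|+|bq|$, as claimed.

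There is no serious obstacle here: the statement is a telescoping identity once the right coordinates are fixed. The only points demanding care are the bookkeeping of the vertical coordinates — getting the signs right in $|b_im_{i+1}|=|m_ia_i|+(y_i-y_{i+1})$ and in $y_p-y_q=|bq|-|pa|$ — and checking that the stated ratio condition really pins $\lambda_i$ down so that the widths $|a_ib_i|$ telescope to $|ab|$ (this is where $p$ on the West and $q$ on the East boundary of $R$ is essential). Degenerate steps with $m_i=m_{i+1}$ collapse $S_i$ to a segment and contribute $0$ to both sides, so they require no separate treatment.
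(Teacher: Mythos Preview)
Your proof is correct and is essentially the same telescoping computation as the paper's. The only organizational difference is that the paper observes directly that $|b_i m_{i+1}| = \lambda_i |bq|$ as well (since $m_i m_{i+1}$ and $pq$ have the same slope, $m_{i+1}$ sits at the same relative height on the East side of $S_i$ as $q$ does on $R$), so each summand equals $\lambda_i\big(|pa|+|ab|+|bq|\big)$ and $\sum_i \lambda_i = 1$ finishes immediately; you instead split the third leg as $|b_i m_{i+1}| = |m_i a_i| + (y_i - y_{i+1})$ and telescope the $y$-coordinates separately, which works but is a slightly longer route to the same identity.
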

\begin{proof}
  Let $c = (|p a| + |a b| + |b q|) / |p q|$. We first show that for every $S_i$ we have that $(|m_i a_i| + |a_i b_i| + |b_i m_{i+1}|) / |m_i m_{i+1}| = c$, for $0 \leq i < k$. Since $S_i$ is a homothet of $R$ and the slopes of $m_i m_{i+1}$ and $p q$ are equal, we have that $|a b| / |p q| = |a_i b_i| / |m_i m_{i+1}|$. Furthermore, by construction $|p a| / |r a| = |m_i a_i| / |r_i a_i|$, and since the slopes of $m_i m_{i+1}$ and $p q$ are equal, we also have that $|b q| / |r a| = |b_i m_{i+1}| / |r_i a_i|$. Finally, since $S_i$ is a homothet of $R$, we have that $|p q| / |r a| = |m_i m_{i+1}| / |r_i a_i|$, which gives $|p a| / |p q| = |m_i a_i| / |m_i m_{i+1}|$ and $|b q| / |p q| = |b_i m_{i+1}| / |m_i m_{i+1}|$. 
  
Hence, since $(|m_i a_i| + |a_i b_i| + |b_i m_{i+1}|) / |m_i m_{i+1}| = c$, for $0 \leq i < k$, we get 
  \begin{align*}
    \sum_{i=0}^{k-1} \big( |m_i a_i| + |a_i b_i| + |b_i m_{i+1}| \big) & = \sum_{i=0}^{k-1} \big( c \cdot |m_i m_{i+1}| \big) \\
    & = c \cdot |p q| \\
    & = |p a| + |a b| + |b q|,
  \end{align*}
  proving the lemma. 
\end{proof}

Before we prove the bound on the spanning ratio of the \delGraph, we first bound the length of the spanning path between vertices $p$ and $q$ for the case where the rectangle \C{p}{q} is partially empty. We call a rectangle \C{p}{q} \emph{half-empty} when \C{p}{q} contains no vertices in \reg{\C{p}{q}}{p}{q} below $p q$ that are visible to $p$ and no vertices in \reg{\C{p}{q}}{q}{p} below $p q$ that are visible to $q$. We denote the $x$- and $y$-coordinate of a point $p$ by $p_x$ and $p_y$. 

\begin{lemma}
  \label{lem:HalfEmptyRectangle}
  Let $p$ and $q$ be two vertices that can see each other. Let \C{p}{q} be a rectangle with $p$ and $q$ on its boundary, such that it is half-empty. Let $a$ and $b$ be the corners of \C{p}{q} on the non-half-empty side. The \delGraph contains a path between $p$ and $q$ of length at most $|p a| + |a b| + |b q|$.
\end{lemma}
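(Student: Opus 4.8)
The plan is to prove Lemma~\ref{lem:HalfEmptyRectangle} by induction on the number of vertices of $P$ inside \C{p}{q}, mirroring the structure of the proof of Lemma~\ref{lem:VisiblePairSpanner} but keeping careful track of which side of $pq$ the relevant vertices lie on. Assume without loss of generality that $pq$ is horizontal with $p$ to the left of $q$, and that the half-empty side of \C{p}{q} is below $pq$ (so $a$ and $b$, the corners on the non-half-empty side, lie above $pq$, with $a$ the Northwest and $b$ the Northeast corner). The base case is when \C{p}{q} is empty of vertices visible to both $p$ and $q$: then $pq$ is an edge of the \delGraph, and by the triangle inequality $|pq| \le |pa| + |ab| + |bq|$, so the single-edge path suffices.

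For the inductive step, I would grow a homothet $C'$ of $R = \C{p}{q}$ with center moving along $pq$ from $p$ towards $q$, maintaining $p$ on its boundary and maintaining the aspect-position constraint $|pa|/|ra| = |m_i a_i|/|r_i a_i|$ from the setup of Lemma~\ref{lem:SummingRectangles} (i.e. the homothets are the $S_i$). Let $v_1$ be the first vertex hit by $C'$ that is visible to $p$ and lies in \reg{C'}{p}{q}. As in Lemma~\ref{lem:VisiblePairSpanner}, using Lemma~\ref{lem:VisibleVertex} one shows $pv_1$ is an edge or a constraint. The crucial new point, and the one I expect to be the main obstacle, is arguing that when we recurse into the rectangle $S_0$ with $p$ and $v_1$ on its boundary (or more precisely into the sub-rectangle $\C{p}{v_1}$), that rectangle is again half-empty, so the induction hypothesis applies. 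This needs Lemma~\ref{lem:VisibleBelowPQ}: since the region of \C{p}{q} below $pq$ contains no vertex visible to $p$ or to $q$, the half-empty hypothesis propagates — no vertex above $pq$ inside \C{p}{q} can see anything below $pq$, hence in particular the sub-rectangle's below-$pq$ region contains no vertex visible to its defining endpoints. One must also handle the case where $v_1$ (or a later $v_i$) lies below $pq$; but the half-empty condition, together with $pq$ not intersecting the exterior and Lemma~\ref{lem:VisibleBelowPQ}, should force the constructed path to stay on or above $pq$, analogously to Lemma~\ref{lem:VisiblePairSpanner}.

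Continuing, I build the sequence $p = v_0, v_1, \dots, v_k = q$ exactly as in Lemma~\ref{lem:VisiblePairSpanner}: after hitting $v_i$ we keep it on $\partial C'$ until the next vertex $v_{i+1}$ visible to $v_i$ is hit, refining the sequence on any pair $v_i v_{i+1}$ that is a constraint rather than an edge, using the half-empty property to guarantee the refinement also stays in a half-empty configuration. Each resulting edge $v_i v_{i+1}$ has length at most the length of the portion of $\partial \C{v_i}{v_{i+1}}$ above $pq$, where \C{v_i}{v_{i+1}} is the appropriately-positioned homothet $S_i$ of $R$; concretely this portion is $|m_i a_i| + |a_i b_i| + |b_i m_{i+1}|$ with $m_i = v_i$ projected suitably — or rather with $v_i$ itself on the West/East boundary of $S_i$.

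Finally, summing over all edges of the path and invoking Lemma~\ref{lem:SummingRectangles} with the points $m_1, \dots, m_{k-1}$ taken as the feet on $pq$ determined by the $v_i$'s (the lemma's hypothesis $|pa|/|ra| = |m_i a_i|/|r_i a_i|$ is exactly how we positioned the growing homothets), we get
\[
\sum_{i=0}^{k-1} |v_i v_{i+1}| \;\le\; \sum_{i=0}^{k-1}\big(|m_i a_i| + |a_i b_i| + |b_i m_{i+1}|\big) \;=\; |pa| + |ab| + |bq|,
\]
which is the desired bound. The one genuinely delicate point, beyond the bookkeeping, is verifying that the half-empty hypothesis is preserved under every recursion and refinement step — that is where Lemma~\ref{lem:VisibleBelowPQ} does the real work, playing the role that "emptiness of \C{p}{q}" played in the unconstrained argument and that Lemma~\ref{lem:VisibleVertex} played in Lemma~\ref{lem:VisiblePairSpanner}.
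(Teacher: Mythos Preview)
Your outline has a genuine gap in the step where you claim the half-empty hypothesis propagates to the sub-rectangles. The half-empty condition for a rectangle $\C{v_i}{v_{i+1}}$ is a condition on the region below the line through $v_i v_{i+1}$, not below $pq$. Lemma~\ref{lem:VisibleBelowPQ} only tells you that nothing above $pq$ sees anything below $pq$; it says nothing about the wedge between $pq$ and the segment $v_i v_{i+1}$ when the latter lies above $pq$. In the paper's proof this is handled not by the growing-homothet construction of Lemma~\ref{lem:VisiblePairSpanner}, but by first finding a single vertex $x$ (via one growth step), applying induction once to get from $p$ to $x$, and then invoking Lemma~\ref{lem:ConvexChain} on triangle $x m_0 q$ to obtain a \emph{convex} chain from $x$ to $q$. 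Convexity is what makes the region between the chain and $pq$ empty, and that emptiness (together with Lemma~\ref{lem:VisibleBelowPQ} for the part below $pq$) is what certifies that each $R_i$ along the chain is half-empty. Your sequential growth does not give you this convexity, so your ``half-empty propagates'' claim is unsupported.

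There is a second gap in the length accounting. Even once one has half-empty rectangles $R_i$ for each chain edge $p_i p_{i+1}$, the inductive bound is $|p_i a_i| + |a_i b_i| + |b_i p_{i+1}|$ for the corners of $R_i$, whereas Lemma~\ref{lem:SummingRectangles} sums the quantities $|m_i a_i'| + |a_i' b_i'| + |b_i' m_{i+1}|$ for the \emph{similar} rectangles $C_i$ anchored on $pq$. These do not coincide, and the paper spends most of its proof on a four-case analysis (depending on whether $x$ lands on the East or North side of $R$ and whether the chain ever descends) with a further three-way classification of the $R_i$, precisely to show that each inductive contribution can be charged to the corresponding $C_i$ budget, with the ``going down'' portions absorbed elsewhere. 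Your single inequality $\sum |v_i v_{i+1}| \le \sum(|m_i a_i| + |a_i b_i| + |b_i m_{i+1}|)$ skips this entirely. A minor additional point: the assumption that $pq$ is horizontal is not without loss of generality here, since the rectangle is axis-aligned and $p,q$ lie on its West and East sides with generally different $y$-coordinates; the paper only normalizes to non-negative slope.
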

\begin{proof}
  We prove the lemma by induction on the rank of \C{x}{y} when ordered by size, for any two visible vertices $x$ and $y$, such that \C{x}{y} is half-empty. We assume without loss of generality that $p$ lies on the West boundary, $q$ lies on the East boundary and that \C{p}{q} is half-empty below $p q$. This implies that $a$ and $b$ are the Northwest and Northeast corner of \C{p}{q}, respectively. We also assume without loss of generality that the slope of $p q$ is non-negative, i.e., $p_x < q_x$ and $p_y \leq q_y$ (see Figure~\ref{fig:RectanglePath}). Note that this can be achieved by swapping $p$ and $q$, if needed. 
  
  \begin{figure}[ht]
    \vspace{-1em}
    \begin{center}
      \includegraphics{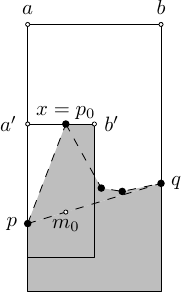}
    \end{center}
    \vspace{-0.05em}
    \caption{An inductive path from $p$ to $q$.}
    \label{fig:RectanglePath}
  \end{figure}
  
  We note that the case where $p$ lies on the West boundary, $q$ lies on the North boundary and \C{p}{q} is half-empty below $p q$ can be viewed as a special case of the one above: We shrink \C{p}{q} until one of $p$ or $q$ lies in a corner. This point can now be viewed as being on both sides defining the corner and hence $p$ and $q$ are on opposite sides: If $p$ lies in the Southwest corner, we treat it as lying on the South boundary when $q$ lies on the North boundary. If $q$ lies in the Northeast corner, we treat it as lying on the East boundary when $p$ lies on the West boundary. Analogous statements hold for the case where $p$ lies on the West boundary, $q$ lies on the North boundary and \C{p}{q} is half-empty above $p q$. 
  
  Let $r$ be the Southwest corner of \C{p}{q}. Let $R$ be a homothet of \C{p}{q} that is contained in \C{p}{q} and whose West boundary is intersected by $p q$. Let $a'$, $b'$, $r'$ be the Northwest, Northeast, and Southwest corner of $R$ and let $m$ be the intersection of $a' r'$ and $p q$. We call homothet $R$ \emph{similar} to \C{p}{q} if and only if $|p a| / |r a| = |m a'| / |r' a'|$. 

  \textbf{Base case:} If \C{p}{q} is a half-empty rectangle of smallest area, then \C{p}{q} does not contain any vertices visible to both $p$ and $q$: Assume this is not the case and grow a rectangle $R$ similar to \C{p}{q} from $p$ to $q$. Let $x$ be the first vertex hit by $R$ that is visible to $p$ and lies in \reg{\C{p}{q}}{p}{q}. Note that this implies that $R$ is contained in \C{p}{q}. Therefore, $R$ is smaller than \C{p}{q}. Furthermore, $R$ is half-empty, since by Lemma~\ref{lem:VisibleBelowPQ}, the part below the line through $p$ and $q$ does not contain any vertices visible to $p$ or $x$ in \reg{\C{p}{q}}{p}{q}, and the part between the line through $p$ and $x$ and the line through $p$ and $q$ does not contain any vertices visible to $p$ or $x$ since $x$ is the first visible vertex hit while growing $R$. However, this contradicts that \C{p}{q} is the smallest half-empty rectangle. 

  Hence, \C{p}{q} does not contain any vertices visible to both $p$ and $q$, which implies that $p q$ is an edge of the \delGraph. Therefore the length of the shortest path from $p$ to $q$ is at most $|p q| \leq |p a| + |a b| + |b q|$. 

  \textbf{Induction step:} We assume that for all half-empty rectangles \C{x}{y} smaller than \C{p}{q} the lemma holds. If $p q$ is an edge of the \delGraph, the length of the shortest path from $p$ to $q$ is at most $|p q| \leq |p a| + |a b| + |b q|$. 

  If $p q$ is not an edge of the \delGraph, there exists a vertex in \C{p}{q} that is visible from both $p$ and $q$. We grow a rectangle $R$ similar to \C{p}{q} from $p$ to $q$. Let $x$ be the first vertex hit by $R$ that is visible to $p$ and lies in \reg{\C{p}{q}}{p}{q} and let $a'$ and $b'$ be the Northwest and Northeast corner of $R$ (see Figure~\ref{fig:RectanglePath}). Note that this implies that $R$ is contained in \C{p}{q}. We also note that $p x$ is not necessarily an edge in the \delGraph, since if it is a constraint, there can be vertices visible to both $p$ and $x$ above $p x$ inside $R$. However, since $R$ is half-empty and smaller than \C{p}{q}, we can apply induction on it and we obtain that the path from $p$ to $x$ has length at most $|p a'| + |a' b'| + |b' x|$ when $x$ lies on the East boundary of $R$, and that the path from $p$ to $x$ has length at most $|p a'| + |a' x|$ when $x$ lies on the North boundary of $R$.

  \textbf{Bounding the path length between visible vertices:} Let $m_0$ be the projection of $x$ along the vertical axis onto $p q$. Since $m_0$ is contained in $R$, $x$ can see $m_0$. Since $x m_0$ and $m_0 q$ are visibility edges and $m_0$ is not the endpoint of a constraint intersecting the interior of triangle $x m_0 q$, we can apply Lemma~\ref{lem:ConvexChain} and obtain a convex chain $x = p_0, p_1, ..., p_k = q$ of visibility edges (see Figure~\ref{fig:RectanglePath}). For each of these visibility edges $p_i p_{i+1}$, there is a homothet $R_i$ of \C{p}{q} that falls in one of the following three types (see Figure~\ref{fig:RectanglesAlongChain}): (i) $p_i$ lies on the North boundary and $p_{i+1}$ lies in the Southeast corner, (ii) $p_i$ lies on the West boundary and $p_{i+1}$ lies on the East boundary and the slope of $p_i p_{i+1}$ is negative, (iii) $p_i$ lies on the West boundary and $p_{i+1}$ lies on the East boundary and the slope of $p_i p_{i+1}$ is not negative. Note that the case where $p_i$ lies on the South boundary and $p_{i+1}$ lies on the North boundary cannot occur, since the slope of any $p_i p_{i+1}$ is at most that of $p q$. Also note that the case where $p_i$ lies on the South boundary and $p_{i+1}$ lies on the East boundary cannot occur, since we can shrink the rectangle until $p$ lies in the Southwest corner, resulting in a Type~(iii) rectangle. Let $a_i$ and $b_i$ be the Northwest and Northeast corner of $R_i$. We note that by convexity, these three types occur in the order Type~(i), Type~(ii), and Type~(iii) along the convex chain from $x$ to $q$. 

  \begin{figure}[ht]
    \begin{center}
      \includegraphics{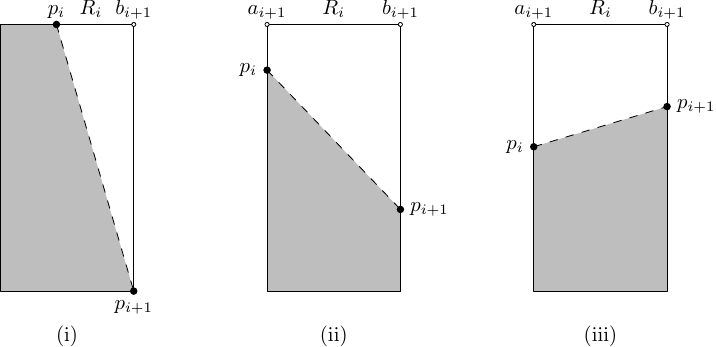}
    \end{center}
    \caption{The three types of rectangles along the convex chain.}
    \label{fig:RectanglesAlongChain}
  \end{figure}

  Let $m_i$ be the projection of $p_i$ along the vertical axis onto $p q$, let $C_i$ be the homothet of \C{p}{q} with $m_i$ and $m_{i+1}$ on its boundary that is similar to \C{p}{q}, and let $a_i'$ and $b_i'$ be the Northwest and Northeast corner of $C_i$. Using these $C_i$, we  shift Type~(ii) and Type~(iii) rectangles down as far as possible: We shift $R_i$ down until either $p_i$ or $p_{i+1}$ lies in one of the North corners or the South boundary corresponds to the South boundary of $C_i$. In the latter case, $R_i$ and $C_i$ are the same rectangle. 

  Since all rectangles $R_i$ are smaller than \C{p}{q}, we can apply induction, provided that we can show that $R_i$ is half-empty. For Type~(i) visibility edges, the part of the rectangle that lies below the line through $p_i$ and $p_{i+1}$ is contained in $R$, which does not contain any visible vertices, and the region of \reg{\C{p}{q}}{p}{q} below the convex chain, which is empty. For Type~(ii) and Type~(iii) visibility edges, the part of the rectangle that lies below the line through $p_i$ and $p_{i+1}$ is contained in the region of \reg{\C{p}{q}}{p}{q} below the convex chain, which is empty, and the region of \C{p}{q} below the line through $p$ and $q$, which does not contain any visible vertices by Lemma~\ref{lem:VisibleBelowPQ}. Hence, all $R_i$ are half-empty and we obtain an inductive path of length at most: (i) $|p_i b_i| + |b_i p_{i+1}|$ for Type~(i) rectangles, (ii) $|p_i a_i| + |a_i b_i| + |b_i p_{i+1}|$ for Type~(ii) rectangles, (iii) $|p_i a_i| + |a_i b_i| + |b_i p_{i+1}|$ for Type~(iii) rectangles. 

  \textbf{Bounding the total path length:} To bound the total path length, we perform a case distinction on the location of $x$ on $R$ and whether the convex path from $x$ to $q$ goes down (see Figure~\ref{fig:RectangleFourCases}): (a) $x$ lies on the East boundary of $R$ and the convex path does not go down, (b) $x$ lies on the East boundary of $R$ and the convex path goes down, (c) $x$ lies on the North boundary of $R$ and the convex path does not go down, (d) $x$ lies on the North boundary of $R$ and the convex path goes down. 

  \begin{figure}[ht]
    \begin{center}
      \includegraphics{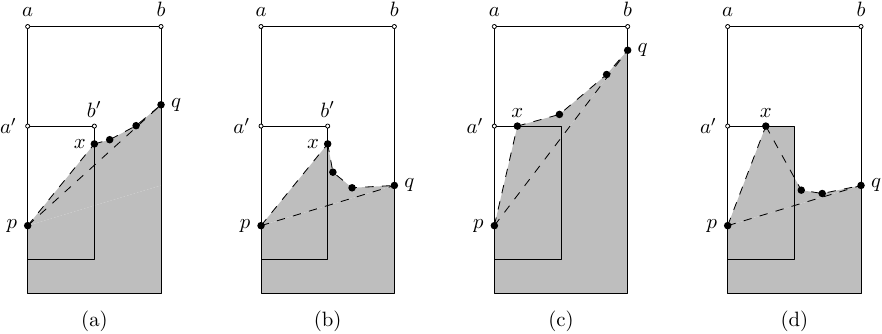}
    \end{center}
    \caption{The four cases depending on location of $x$ on $R$ and whether the convex path from $x$ to $q$ goes down.}
    \label{fig:RectangleFourCases}
  \end{figure}

  \textbf{Case (a):} The vertex $x$ lies on the East boundary of $R$ and the convex path does not go down (see Figure~\ref{fig:RectangleFourCases}a). Recall that the length of the path from $p$ to $x$ is at most $|p a'| + |a' b'| + |b' x|$, which is at most $|p a'| + |a' b'| + |b' m_0|$. Since the convex chain does not go down, it cannot contain any Type~(i) or Type~(ii) visibility edges. Furthermore, since $x$ lies on the East boundary of $R$, $R$ and all $C_i$ are disjoint. Thus, Lemma~\ref{lem:SummingRectangles} implies that the boundaries above $p q$ of $R$ and all $C_i$ sum up to $|p a| + |a b| + |b q|$. Hence, if we can show that, for all $R_i$, $|p_i a_i| + |a_i b_i| + |b_i p_{i+1}| \leq |m_i a_i'| + |a_i' b_i'| + |b_i' m_{i+1}|$, the proof of this case is complete.

  By convexity, the slope of $p_i p_{i+1}$ is at most that of $p q$ and $m_i m_{i+1}$. Hence, when $p_{i+1}$ lies in the Northeast corner of $R_i$, we have $p_{i+1} = b_i$ and $|p_i a_i| + |a_i p_{i+1}| \leq |m_i a_i'| + |a_i' b_i'| + |b_i' m_{i+1}|$. If $p_{i+1}$ does not lie in the Northeast corner, $R_i = C_i$. Hence, since $p_i$ and $p_{i+1}$ lie above $p q$, we have that $|p_i a_i| + |a_i b_i| + |b_i p_{i+1}| \leq |m_i a_i'| + |a_i' b_i'| + |b_i' m_{i+1}|$. 

  \textbf{Case (b):} The vertex $x$ lies on the East boundary of $R$ and the convex path goes down (see Figure~\ref{fig:RectangleFourCases}b). Recall that the length of the path from $p$ to $x$ is at most $|p a'| + |a' b'| + |b' x|$. Let $p_j$ be the lowest vertex along the convex chain. Since $p_j$ lies above $p q$ and $p q$ has non-negative slope, the descent of the convex path is at most $|x m_0|$. Hence, when we charge this to $R$, we used $|p a'| + |a' b'| + |b' m_0|$ of its boundary (see Figure~\ref{fig:RectanglePathEast}). 

\begin{figure}[ht]
  \begin{minipage}[t]{0.5\linewidth}
    \begin{center}
      \includegraphics{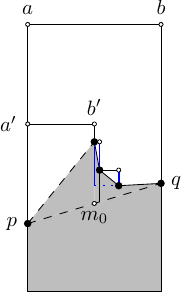}
    \end{center}
    \caption{Going down along the convex chain (blue) is charged to $R$ (orange).}
    \label{fig:RectanglePathEast}
  \end{minipage}
  \hspace{0.05\linewidth}
  \begin{minipage}[t]{0.42\linewidth}
    \begin{center}
      \includegraphics{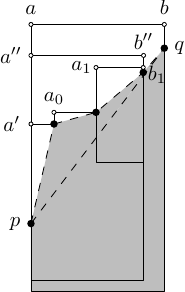}
    \end{center}
    \caption{Charging the path from $p$ to $p_j$ to \C{p}{p_j}.}
    \label{fig:RectanglePathNorthUp}
  \end{minipage}
\end{figure}

  Like in the Case~(a), since $x$ lies on the East boundary of $R$, $R$ and all $C_i$ are disjoint. Thus, Lemma~\ref{lem:SummingRectangles} implies that the boundaries above $p q$ of $R$ and all $C_i$ sum up to $|p a| + |a b| + |b q|$. Hence, if we can show that, for all $R_i$, the inductive path length is at most $|m_i a_i'| + |a_i' b_i'| + |b_i' m_{i+1}|$, the proof of this case is complete. 

  For Type~(i) visibility edges, we have already charged $|b_i p_{i+1}|$ to $R$, so it remains to show that $|p_i b_i| \leq |m_i a_i'| + |a_i' b_i'| + |b_i' m_{i+1}|$. This follows, since $m_i$ and $m_{i+1}$ are the vertical projections of $p_i$ and $p_{i+1}$, which implies that $|p_i b_i| = |a_i' b_i'|$. 

  For Type~(ii) visibility edges, we already charged $|b_i p_{i+1}| - |p_i a_i|$ to $R$, so we can consider $p_i p_{i+1}$ to be horizontal and it remains to charge the remaining $2 \cdot |p_i a_i| + |a_i b_i|$. If $p_i$ lies in the Northwest corner of $R_i$, it follows that $|p_i a_i| = 0$ and we have that $|p_i b_i| = |a_i' b_i'| \leq |m_i a_i'| + |a_i' b_i'| + |b_i' m_{i+1}|$. If $p_i$ does not lie in the Northwest corner, $R_i$ is the same as $C_i$. Hence, since we can consider $p_i p_{i+1}$ to be horizontal and $p_i$ and $p_{i+1}$ lie above $p q$, it follows that $2 \cdot |p_i a_i| + |a_i b_i| \leq |m_i a_i'| + |a_i' b_i'| + |b_i' m_{i+1}|$. 

  Finally, Type~(iii) visibility edges are charged as in Case~(a), hence we have that $|p_i a_i| + |a_i b_i| + |b_i p_{i+1}| \leq |m_i a_i'| + |a_i' b_i'| + |b_i' m_{i+1}|$, completing the proof of this case. 

  \textbf{Case (c):} Vertex $x$ lies on the North boundary of $R$ and the convex path does not go down (see Figure~\ref{fig:RectangleFourCases}c). Recall that the length of the path from $p$ to $x$ is at most $|p a'| + |a' x|$. Since the convex chain does not go down, it cannot contain any Type~(i) or Type~(ii) visibility edges. Let $p_j$ be the first vertex along the chain, such that $R_{j-1}$ is the same as $C_{j-1}$. Since $q$ lies on the East boundary of \C{p}{q}, this condition is satisfied for the last visibility edge along the convex chain, hence $p_j$ exists. 

  Let \C{p}{p_j} be the homothet of \C{p}{q} that has $p$ and $p_j$ on its boundary and is similar \C{p}{q}. Let $a''$ and $b''$ be the Northwest and Northeast corners of \C{p}{p_j} (see Figure~\ref{fig:RectanglePathNorthUp}). Since $p_j$ is first vertex along the convex chain that does not lie in the Northeast corner of $R_{j-1}$, we have that along the path from $p$ to $p_j$ the projections of $a' x$, all $a_i p_{i+1}$, and $a_{j-1} b_{j-1}$ onto $a'' b''$ are disjoint and the projections of $p a'$, all $p_i a_i$, and $p_{j-1} a_{j-1}$ onto $p a''$ are disjoint. Hence, their lengths sum up to at most $|p a''| + |a'' b''|$. Finally, since $|b_{j-1} p_j| \leq |b'' p_j|$, the total length of the path from $p$ to $p_j$ is at most $|p a''| + |a'' b''| + |b'' p_j|$, which is at most $|p a''| + |a'' b''| + |b'' m_j|$. 

  All Type~(iii) visibility edges following $p_j$ are charged as in Case~(a), hence we have that $|p_i a_i| + |a_i b_i| + |b_i p_{i+1}| \leq |m_i a_i'| + |a_i' b_i'| + |b_i' m_{i+1}|$. We now apply Lemma~\ref{lem:SummingRectangles} to \C{p}{p_j} and all $C_i$ following $p_j$ and obtain that the total length of the path from $p$ to $q$ is at most $|p a| + |a b| + |b q|$. 

  \textbf{Case (d):} Vertex $x$ lies on the North boundary of $R$ and the convex path goes down (see Figure~\ref{fig:RectangleFourCases}d). Recall that the length of the path from $p$ to $x$ is at most $|p a'| + |a' x|$ and that $p_1$ is the neighbor of $x$ along the convex chain. Let \C{p}{p_1} be the homothet of \C{p}{q} that has $p$ and $p_1$ on its boundary and is similar to \C{p}{q}. Let $a''$ and $b''$ be the Northwest and Northeast corners of \C{p}{p_1}. Since $p_1$ lies to the right of $R$ and lower than $x$, it lies on the East boundary of \C{p}{p_1}. We first show that the length of the path from $p$ to $p_1$ is at most $|p a''| + |a'' b''| + |b'' p_1|$. 

  If $x p_1$ is a Type~(i) visibility edge, the length of the path from $x$ to $p_1$ is at most $|x b_0| + |b_0 p_1|$. Hence we have a path from $p$ to $p_1$ of length at most $|p a'| + |a' x| + |x b_0| + |b_0 p_1| = |p a'| + |a'' b''| + |b_0 p_1|$. Since $|p a'| \leq |p a''|$ and $|b_0 p_1| \leq |b'' p_1|$, this implies that the path has length at most $|p a''| + |a'' b''| + |b'' p_1|$. If $x p_1$ is a Type~(ii) visibility edge and $x$ lies in the Northwest corner an analogous argument shows that the path from $p$ to $p_1$ is at most $|p a''| + |a'' b''| + |b'' p_1|$. If $x p_1$ is a Type~(ii) visibility edge and $R_0 = C_0$, we have that the projections of $a' x$ and $a_0 b_0$ onto $a'' b''$ are disjoint and the projections of $p a'$ and $x a_0$ onto $p a''$ are disjoint. Hence, their total lengths sum up to at most $|p a''| + |a'' b''|$. Finally, since $|b_0 p_1| \leq |b'' p_1|$, the total length of the path from $p$ to $p_1$ is at most $|p a''| + |a'' b''| + |b'' p_1|$. 

  Next, we observe, like in Case~(b), that starting from $p_1$ the convex path cannot go down more than $|p_1 m_1|$. Hence, when we charge this to \C{p}{p_1}, we used $|p a''| + |a'' b''| + |b'' m_1|$ of its boundary. Finally, we use arguments analogous to the ones in Case~(b) to show that each inductive path after $p_1$ has length at most $|m_i a_i'| + |a_i' b_i'| + |b_i' m_{i+1}|$. We now apply Lemma~\ref{lem:SummingRectangles} to \C{p}{p_1} and all $C_i$ following $p_1$ and obtain that the total length of the path from $p$ to $q$ is at most $|p a| + |a b| + |b q|$. 
\end{proof}

Using the above lemma, we improve the upper bounds on the spanning ratio of the constrained generalized Delaunay graph, that uses an arbitrary rectangle as its empty \circle, compared to the general upper bound implied by Theorem~\ref{theo:GeneralizedDelaunaySpanningRatio}. 

\begin{lemma}
  Let $p$ and $q$ be two vertices that can see each other. Let $l$ and $s$ be the length of the long and short side of \C{p}{q}. The \delGraph contains a path between $p$ and $q$ of length at most $\left( \frac{2 l}{s} + 1 \right) \cdot \left( |p_x - q_x| + |p_y - q_y| \right)$. 
\end{lemma}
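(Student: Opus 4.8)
After the usual reductions — reflect so that $p$ is the Southwest corner of \C{p}{q}, so that $q$ lies on the East boundary (the North-boundary case being symmetric), and so that $pq$ has non-negative slope, treating $pq$ horizontal or vertical as a limiting case — write $W=|p_x-q_x|$ and $H$ for the width and height of \C{p}{q}, and let $a$, $b$ be its Northwest and Northeast corners; then $|p_y-q_y|\le H$ and $|pa|+|ab|+|bq|=2H+W-|p_y-q_y|$. I would prove the stronger statement that \delGraph contains a path from $p$ to $q$ of length at most $|pa|+|ab|+|bq|$, by induction on the area of \C{p}{q} over all mutually visible pairs together with their corner rectangle. The lemma follows from this, since $2H+W-|p_y-q_y|\le(2l/s+1)(W+|p_y-q_y|)$ rearranges to $H-|p_y-q_y|\le(l/s)(W+|p_y-q_y|)$, which is implied by $H\le(l/s)W$, true in both cases $H\le W$ (then $l=W$, $s=H$, so $(l/s)W=W^2/H\ge H$) and $H>W$ (then $l=H$, $s=W$, so $(l/s)W=H$).

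For the inductive step, if $pq$ is an edge of \delGraph we are done since $|pq|\le|pa|+|ab|+|bq|$, so assume it is not; then \C{p}{q} contains a vertex in its interior visible to both $p$ and $q$. The key use of the hypothesis is: since $p$ is a corner of \C{p}{q}, the rectangle lies in one quadrant at $p$, so a constraint incident to $p$ that meets \C{p}{q} cannot be routed around inside \C{p}{q}; hence every vertex of \C{p}{q} visible to both $p$ and $q$ lies in \reg{\C{p}{q}}{p}{q}. Grow a homothet $R$ of \C{p}{q} from $p$, keeping $p$ in its Southwest corner, and let $v$ be the first vertex hit that is visible to $p$ and lies in \reg{\C{p}{q}}{p}{q}; this $v$ exists, and because a vertex visible to both lies strictly inside \C{p}{q}, $R=\C{p}{v}$ is strictly smaller than \C{p}{q}.

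Next I would verify that $R=\C{p}{v}$ is half-empty below $pv$: no vertex of \reg{R}{p}{v} below $pv$ is visible to $p$ — such a vertex would be a second vertex of \reg{\C{p}{q}}{p}{q} visible to $p$, contradicting the choice of $v$ (here one again invokes the corner property, giving $\reg{R}{p}{v}\subseteq\reg{\C{p}{q}}{p}{q}$ because every constraint at $p$ bounding the latter cone already meets $R$) — and no such vertex is visible to $v$ either, since Lemma~\ref{lem:ConvexChain} applied to the triangle spanned by $p$, $v$ and that vertex would again yield a second vertex of \reg{\C{p}{q}}{p}{q} visible to $p$. Hence Lemma~\ref{lem:HalfEmptyRectangle} applies to $R$ and gives a path from $p$ to $v$ of length at most the boundary of $R$ on the non-half-empty side of $pv$. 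From $v$ I would continue exactly as in the proof of Lemma~\ref{lem:HalfEmptyRectangle}: project $v$ vertically onto $pq$ at $m_0$, apply Lemma~\ref{lem:ConvexChain} to triangle $v\,m_0\,q$ to get a convex chain $v=p_0,p_1,\dots,p_k=q$ whose region down to $pq$ is empty, attach to each edge $p_ip_{i+1}$ a maximally down-shifted homothet $R_i$ of \C{p}{q} of one of the three types used there, check that each $R_i$ is smaller than \C{p}{q} and half-empty, apply the induction hypothesis (or Lemma~\ref{lem:HalfEmptyRectangle}) to each, and sum the contributions against the aligned homothets through Lemma~\ref{lem:SummingRectangles}, running the same four cases on whether $v$ lies on the East or North boundary of $R$ and whether the chain descends, to conclude the total length is at most $|pa|+|ab|+|bq|$.

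The main obstacle is precisely the half-emptiness of the auxiliary rectangles $R$ and $R_i$. In Lemma~\ref{lem:HalfEmptyRectangle} the region of \C{p}{q} below $pq$ is vertex-free, so Lemma~\ref{lem:VisibleBelowPQ} keeps these rectangles half-empty; here \C{p}{q} need not be half-empty, and the hypothesis that $p$ is a corner must be made to play that role — both to force the witnessing vertex into \reg{\C{p}{q}}{p}{q} and to control the $R_i$ (which may require choosing them conservatively so they do not dip below $pq$, or recursing through the corner in the types where one of $p_i,p_{i+1}$ becomes a corner). Getting this substitution to go through cleanly, dealing with the degenerate situation in which $v$ lies directly above $p$, and carrying the four-case bookkeeping, is where the real work lies; everything else is as in Lemma~\ref{lem:HalfEmptyRectangle}.
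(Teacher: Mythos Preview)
Your proposal takes a genuinely different route from the paper, and the obstacle you flag at the end is real and, as written, unresolved.

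The paper does \emph{not} attempt the stronger bound $|pa|+|ab|+|bq|$. Instead it proves, by induction on the size of the corner rectangle, the bound $c\cdot|p_x-q_x|+d\cdot|p_y-q_y|$ and determines $c$ and $d$ along the way. The induction step splits on whether some vertex $u$ visible to both $p$ and $q$ lies \emph{below} $pq$. If such a $u$ exists, one simply recurses on the two corner rectangles $\C{p}{u}$ and $\C{u}{q}$; the $L^1$-type invariant then telescopes, with the twist that when $u$ lies in the triangle $ghq$ near the Southeast corner the roles of $c$ and $d$ swap in the $\C{u}{q}$ recursion, forcing $c=d$. If no such $u$ exists, Lemma~\ref{lem:VisibleVertex} shows \C{p}{q} is half-empty below $pq$, so Lemma~\ref{lem:HalfEmptyRectangle} applies directly and yields a path of length at most $|pa|+|ab|+|bq|$; bounding this by $c\cdot(q_x-p_x)$ is what finally pins down $c=d=2l/s+1$.

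Your plan tries to bypass this dichotomy by rerunning the machinery of Lemma~\ref{lem:HalfEmptyRectangle} without the half-empty hypothesis, hoping the corner condition compensates. But the convex-chain rectangles $R_i$ in that argument are allowed to dip below $pq$, and their half-emptiness there is supplied by Lemma~\ref{lem:VisibleBelowPQ}, which in turn needs \C{p}{q} half-empty --- exactly what you no longer assume. Your suggested fixes (keep the $R_i$ above $pq$, or recurse ``through the corner'') are not worked out, and it is unclear they can be: forcing $R_i$ to stay above $pq$ destroys the similarity to \C{p}{q} that Lemma~\ref{lem:SummingRectangles} relies on, and recursing on corner rectangles for the chain edges brings you right back to a telescoping over $\Delta x,\Delta y$, i.e.\ to the paper's argument. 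Note also that your stronger target need not be compatible with a split at a vertex $u$ below $pq$: already for a unit square with $p$ at the Southwest and $q$ at the Northeast corner the target is $2$, while the corner-rectangle bound for $p$ to a vertex near the Southeast corner alone can exceed $2$, so the induction does not close in the way you sketch. The paper's weaker invariant is precisely what makes the ``vertex below'' case trivial and renders the whole proof short.
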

\begin{proof}
  We slightly abuse notation and let \C{p}{q} be the rectangle that is a homothet of $C$ with $p$ and $q$ on its boundary, such that $p$ lies in a corner of \C{p}{q}. We assume without loss of generality that $p$ lies on the Southwest corner and $q$ lies on the East boundary. Note that this implies that the slope of $p q$ is non-negative, i.e., $p_x < q_x$ and $p_y \leq q_y$. We prove the lemma by induction on the rank of \C{x}{y} when ordered by size, for any two visible vertices $x$ and $y$, such that $x$ lies in a corner of \C{x}{y}. In fact, we show that the \delGraph contains a path between $x$ and $y$ of length at most $c \cdot (q_x - p_x) + d \cdot (q_y - p_y)$ and derive bounds on $c$ and $d$. 

  \textbf{Base case:} If \C{p}{q} is the smallest rectangle with $p$ in a corner, then \C{p}{q} does not contain any vertices visible to both $p$ and $q$: Let $u$ be a vertex in \C{p}{q} that is visible to both $p$ and $q$. Let \C{p}{u} be the rectangle with $p$ in a corner and $u$ on its boundary. Since $u$ lies in \C{p}{q}, \C{p}{u} is smaller than \C{p}{q}, contradicting that \C{p}{q} is the smallest rectangle with $p$ in a corner. Hence, \C{p}{q} does not contain any vertices visible to both $p$ and $q$, which implies that $p q$ is an edge of the \delGraph. Hence, the \delGraph contains a path between $p$ and $q$ of length at most $|p q| \leq (q_x - p_x) + (q_y - p_y) \leq c \cdot (q_x - p_x) + d \cdot (q_y - p_y)$, provided that $c \geq 1$ and $d \geq 1$. 

  \textbf{Induction step:} We assume that the lemma holds for all rectangles \C{x}{y} smaller than \C{p}{q}, with $x$ in some corner of \C{p}{q}. If $p q$ is an edge of the \delGraph, by the triangle inequality, $|p q|$ is at most $|p_x - q_x| + |p_y - q_y|$. 

  If there is no edge between $p$ and $q$, there exists a vertex $u$ in \C{p}{q} that is visible from both $p$ and $q$. We first look at the case where $u$ lies below $p q$. Let $g$ be the intersection of the South boundary of \C{p}{q} and the line though $q$ parallel to the diagonal of \C{p}{q} through $p$, and let $h$ be the Southeast corner of \C{p}{q} (see Figure~\ref{fig:RectangleBelow}). If $u$ lies in triangle $p g q$, by induction we have that the path from $p$ to $u$ has length at most $c \cdot (u_x - p_x) + d \cdot (u_y - p_y)$ and the path from $u$ to $q$ has length at most $c \cdot (q_x - u_x) + d \cdot (q_y - u_y)$. Hence, there exists a path from $p$ to $q$ via $u$ of length at most $c \cdot (q_x - p_x) + d \cdot (q_y - p_y)$. 
  
  \begin{figure}[ht]
    \begin{center}
      \includegraphics{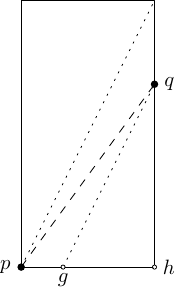}
    \end{center}
    \caption{Rectangle \C{p}{q} with points $g$ and $h$.}
    \label{fig:RectangleBelow}
  \end{figure}

  If $u$ lies in triangle $g h q$, by induction we have that the path from $p$ to $u$ has length at most $c \cdot (u_x - p_x) + d \cdot (u_y - p_y)$ and the path from $q$ to $u$ has length at most $d \cdot (q_x - u_x) + c \cdot (q_y - u_y)$. When we take $c$ and $d$ to be equal, this implies that there exists a path from $p$ to $q$ via $u$ of length at most $c \cdot (q_x - p_x) + d \cdot (q_y - p_y)$. 
  
  If there does not exist a vertex below $p q$ that is visible to both $p$ and $q$, than Lemma~\ref{lem:VisibleVertex} implies that there are no vertices in \reg{\C{p}{q}}{p}{q} below $p q$ that are visible to $p$ and that there are no vertices in \reg{\C{p}{q}}{q}{p} below $p q$ that are visible to $q$. Hence, we can apply Lemma~\ref{lem:HalfEmptyRectangle} and obtain that there exists a path between $p$ and $q$ of length at most $|p a| + |a b| + |b q|$, where $a$ and $b$ are the Northwest and Northeast corner of \C{p}{q}. Since $|a b|$ is $(q_x - p_x)$ and $|b q| \leq |p a| \leq \frac{l}{s} \cdot (q_x - p_x)$, we can upper bound $|p a| + |a b| + |b q|$ by $c \cdot (q_x - p_x)$ when $c$ is at least $\left( \frac{2 l}{s} + 1 \right)$. Hence, since $c$ and $d$ need to be equal, we obtain that all cases work out when $c = d = \left( \frac{2 l}{s} + 1 \right)$. 
\end{proof}

Finally, since $(|p_x - q_x| + |p_y - q_y|) / |p q|$ is at most $\sqrt 2$, we obtain the following theorem. 

\begin{theorem}
  The \delGraph using an empty rectangle as empty \circle has spanning ratio at most $\sqrt{2} \cdot \left( \frac{2 l}{s} + 1 \right)$.
\end{theorem}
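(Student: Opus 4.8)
The theorem is essentially a corollary of the lemma immediately preceding it, together with a change of metric and the usual passage from edges to paths. First I would recall that the spanning ratio in question is measured against $\Vis(P,S)$, so it suffices to bound, for every pair of mutually visible vertices $p$ and $q$, the graph distance $\delta_G(p,q)$ by $\sqrt{2}\cdot(2l/s+1)\cdot|pq|$ (here $G$ denotes the \delGraph). Indeed, for arbitrary vertices $u$ and $v$ take a shortest path $u=w_0,w_1,\dots,w_m=v$ in $\Vis(P,S)$; each $w_iw_{i+1}$ is a visibility edge, so $\delta_G(u,v)\le\sum_i\delta_G(w_i,w_{i+1})\le \sqrt{2}\,(2l/s+1)\sum_i|w_iw_{i+1}|=\sqrt{2}\,(2l/s+1)\,\delta_{\Vis(P,S)}(u,v)$.

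So fix two mutually visible vertices $p$ and $q$. The key observation is that there is always a homothet of $C$ with one of them at a corner and the other on its boundary: place $p$ at the corner of a homothet of $C$ facing $q$ (the corner from which the quadrant through the rectangle contains $q$) and scale the homothet until $q$ first lies on its boundary. After possibly exchanging the names of $p$ and $q$, reflecting the configuration across a symmetry axis of the axis-aligned rectangle $C$, and/or transposing the coordinate system — a transposition replaces $C$ by the rectangle $C^{\top}$ with the same long-to-short side ratio $l/s$ and leaves $|p_x-q_x|+|p_y-q_y|$ unchanged — we are precisely in the situation of the preceding lemma, with $p$ in a corner of $\C{p}{q}$. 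That lemma then yields a path in $G$ between $p$ and $q$ of length at most $(2l/s+1)\bigl(|p_x-q_x|+|p_y-q_y|\bigr)$.

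To conclude, I would use $|p_x-q_x|+|p_y-q_y|\le\sqrt{2}\,|pq|$ (from $a+b\le\sqrt{2(a^2+b^2)}$), which gives $\delta_G(p,q)\le\sqrt{2}\,(2l/s+1)\,|pq|$; combining this with the first paragraph proves the theorem. The substantive work is entirely contained in the preceding lemma; what remains here is bookkeeping, and the only step meriting a moment's care is the reduction to that lemma's normal form, which may require a transposition rather than merely an axis reflection (an axis-aligned rectangle need not be symmetric under a diagonal reflection). This is harmless precisely because the claimed bound depends on $C$ only through the scale- and transposition-invariant quantity $l/s$. The existence of the corner homothet, the $\ell_1$-to-$\ell_2$ estimate, and the passage from visible pairs to all pairs are all routine.
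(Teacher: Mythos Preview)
Your proposal is correct and follows the same route as the paper: the theorem is obtained from the preceding lemma together with the elementary inequality $|p_x-q_x|+|p_y-q_y|\le\sqrt{2}\,|pq|$, and the paper's own proof is literally that one line. The additional details you supply---the reduction from arbitrary pairs to visible pairs via a shortest path in $\Vis(P,S)$, the explicit construction of a homothet with $p$ in a corner, and the observation that the ``without loss of generality'' normalisation may require a transposition of the coordinate system (harmless because the bound depends on $C$ only through $l/s$)---are all sound and merely make explicit what the paper leaves implicit; your remark on transposition in particular plugs a small gap in the paper's WLOG reduction inside the lemma's proof, since reflections and swapping $p,q$ alone cannot move $q$ from the North side to the East side of a non-square rectangle.
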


The above theorem is quite a bit tighter than Theorem~\ref{theo:GeneralizedDelaunaySpanningRatio}, though it only holds for rectangles. For general rectangles, the dependency on the aspect ratio is lowered from cubic to linear. For squares, the implied spanning ratio drops from $24 \cdot \sqrt{4 + 2\sqrt{2}}  \approx 62.72$ to $3 \cdot \sqrt{2} \approx 4.25$, which is far closer to the tight ratio of 2.61 in the unconstrained setting~\cite{BGHP12}.

\subsection{Lower Bound for Rectangles}
In this section we provide a lower bound on the spanning ratio of \delGraph{s} using an empty rectangle as empty \circle. Like the upper bound, this lower bound is linear in the aspect ratio of the rectangle, hence the upper bound is at most a constant factor removed from the tight spanning ratio. 

\begin{theorem}
  Delaunay triangulations based on rectangles have spanning ratio at least $\sqrt{2} \cdot \sqrt{(l/s)^2 + 1 + (l/s) \cdot \sqrt{(l/s)^2 + 1}}$, where $l$ and $s$ are the length of the long and short side of the rectangle. 
\end{theorem}
\begin{proof}
Without loss of generality, we assume that $l$ is the height of the rectangle and $s$ is its width. We construct the lower bound as follows: We make two columns of $n/2$ vertices each, such that the horizontal distance between the two columns is $s$ and the height of each column is $l + y$, for $y > 0$ to be defined later. We label the vertices in the left column $l_1, l_2, ..., l_\frac{n}{2}$ and those in the right column $r_1, r_2, ..., r_\frac{n}{2}$. Next, we shift the left column up by $l - \epsilon$, for some arbitrarily small $\epsilon > 0$ (see Figure~\ref{fig:LowerBound}a), and move the vertices an arbitrarily small distance in horizontal direction, such that $l_i$ lies to the right of $l_{i+1}$ and $r_i$ lies to the right of $r_{i+1}$ for $1 \leq i < n/2$ (see Figure~\ref{fig:LowerBound}b). 

\begin{figure}[ht]
  \begin{center}
    \includegraphics{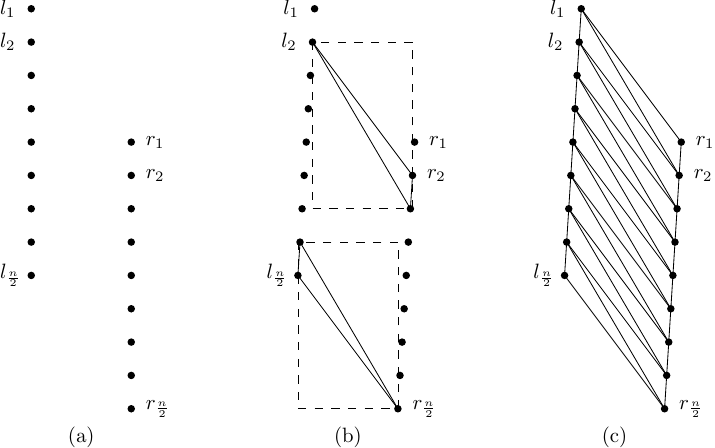}
  \end{center}
  \caption{Constructing the lower bound: (a) Placing the vertices, (b) adding the triangles for two rectangles, (c) the resulting Delaunay graph.}
  \label{fig:LowerBound}
\end{figure}

The placement of vertices guarantees that the Delaunay triangulation contains the edges $l_i l_{i+1}$, $r_i r_{i+1}$, and $l_i r_{i+1}$ for $1 \leq i < n/2$, as well as the edges $l_i r_i$  for $1 \leq i \leq n/2$. The resulting graph is shown in Figure~\ref{fig:LowerBound}c. 

We proceed to analyze the length of the shortest path between $l_\frac{n}{2}$ and $r_1$, specifically the one via $l_1$. Since all perturbations can be made arbitrarily small, this path has length $y + l + \sqrt{l^2 + s^2}$ as $n \rightarrow \infty$. The Euclidean distance between $l_\frac{n}{2}$ and $r_1$ is arbitrarily close to $\sqrt{y^2 + s^2}$. This implies that the spanning ratio is lower bounded by \[\frac{y + l + \sqrt{l^2 + s^2}}{\sqrt{y^2 + s^2}}.\] 

It remains to determine the worst case value of $y$. In order to find this, we determine the derivative of the spanning ratio with respect to $y$: \[\frac{s^2 - y \cdot (l + \sqrt{l^2 + s^2})}{(y^2 + s^2)^{3/2}}.\] This derivative is 0 when $y$ equals $\sqrt{l^2 + s^2} - l$. It is easy to verify that this is a maximum and that the spanning ratio is \[\frac{2 \sqrt{l^2 + s^2}}{\sqrt{(\sqrt{l^2 + s^2} - l)^2 + s^2}}.\] Since both $l$ and $s$ are positive, this expression can be rewritten to \[\sqrt{2} \cdot \sqrt{\frac{\sqrt{l^2 + s^2} \cdot (l + \sqrt{l^2 + s^2})}{s^2}},\] which in turn can be rewritten to \[\sqrt{2} \cdot \sqrt{\left(\frac{l}{s}\right)^2 + 1 + \left(\frac{l}{s}\right) \cdot \sqrt{\left(\frac{l}{s}\right)^2 + 1}}.\] 
\end{proof}

We note that for a square $l$ and $s$ are equal and the lower bound becomes $\sqrt{2} \cdot \sqrt{2 + \sqrt{2}} = \sqrt{4 + 2 \sqrt{2}}$, matching the lower bound by Bonichon~\etal~\cite{BGHP12}. This leads us to conjecture that the lower bound actually is the tight spanning ratio of the constrained Delaunay graphs for rectangles.

\section{Conclusion}
We showed that every \delGraph is a plane spanner, whose spanning ratio depends on the $\alpha$-diamond property and the visible-pair $\kappa$-spanner property. In the special case where the empty \circle is a rectangle, we reduce the spanning ratio by showing that it depends linearly on the aspect ratio of the rectangle used to construct the graph. 

While the results presented here are very general, the implied upper bound on the spanning ratio is likely to be far from tight. Indeed, as mentioned in Section~\ref{sec:SpanningRatioDelaunay}, proofs designed with a specific convex shape in mind give better upper bounds, some of which are even tight. Also considering the results presented for rectangles, which lower the dependency of the spanning ratio on the aspect ratio from cubic to linear, we conjecture that similar improvements can be made for other families of convex shapes. 

In light of other recent results in the constrained setting, such as the fact that Yao- and $\theta$-graphs with sufficiently many cones are spanners, the result presented in this paper raises a tantalizing question: What conditions need to hold for a graph to be a spanner in the constrained setting? In particular, these and previous results show a number of sufficient conditions, but do not immediately give rise to a set of necessary conditions.

\bibliographystyle{plain}
\bibliography{references}

\end{document}